\def\Tr{{\rm Tr \,}}
\def\CC{{\mathcal C}}
\def\CE{{\mathcal E}}
\def\CM{{\mathcal M}}
\def\CO{{\mathcal O}}
\def\CP{{\mathcal P}}
\def\CT{{\mathcal T}}
\def\CZ{{\mathcal Z}}
\def\be{\begin{equation}}
\def\ee{\end{equation}}
\def\bea{\begin{eqnarray}}
\def\eea{\end{eqnarray}}
\newcommand{\tq}{q}
\newcommand{\2}{{\mathtt 2}}
\newcommand{\locc}{{LOCC}}
\newcommand{\slocc}{{SLOCC}}
\newcommand{\1}{{1}}
\newtheorem{definition}{Definition}[section]
\newtheorem{lemma}{Lemma}[section]
\newtheorem{proposition}{Proposition}[section]
\newtheorem{remark}{Remark}[section]
\begin{document}

\title{Multi-partite entanglement monotones}

\author{Abhijit Gadde}
\affiliation{Department of Theoretical Physics \\ 
Tata Institute for Fundamental Research, Mumbai 400005}
\email{abhijit@theory.tifr.res.in}

\author{Shraiyance Jain}
\affiliation{Department of Theoretical Physics \\ 
Tata Institute for Fundamental Research, Mumbai 400005}
\email{shraiyance.jain@tifr.res.in}

\author{Harshal Kulkarni}
\affiliation{Department of Theoretical Physics \\ 
Tata Institute for Fundamental Research, Mumbai 400005}
\affiliation{Indian Institute of Science Education and Research, 
Kolkata 741246}
\email{harshalkulkarni20@gmail.com}

\begin{abstract}
If we want to transform the quantum state of a system to another using local measurement processes, what is the probability of success? This probability is bounded by quantifying entanglement in both the states. In this paper, we construct a family of local unitary invariants of multipartite states that are monotonic under local operations and classical communication on average. These monotones are constructed from local unitary invariant polynomials of the state and its conjugate, and hence are easy to compute for pure states. Using these measures we bound the success probability of transforming a given state into another state using local quantum operations and classical communication.
\end{abstract}
\maketitle

\section{Introduction and summary}\label{intro}
Let $|\psi\rangle$ be a $\tq$-partite entangled state. Assume that this state can be transformed into another state $|\phi\rangle$ using local quantum operations and classical communication (\locc)  with a non-zero probability $p_{|\psi\rangle\to |\phi\rangle}$. What more can we say about this probability? Can we bound it?
It was realized in \cite{Vidal_2000} that this question is intimately related to  quantifying entanglement. Building on the work of \cite{Vedral_1997}, the author introduced the notion of an \emph{entanglement monotone} $\mu$. It is a function of the quantum state of the system that satisfies certain properties. The most important property  is that $\mu$ does not increase  \emph{on average} under \locc. It was shown in \cite{Vidal_2000} that, 
\begin{align}
    p_{|\psi\rangle\to |\phi\rangle}\leq \frac{\mu(|\psi\rangle)}{\mu(|\phi\rangle)}.
\end{align}
Although the discussion in \cite{Vidal_2000} focused on bi-partite states, the idea is general and can be extended to multi-partite states as emphasized in \cite{10.5555/2011326.2011328}. We will review this shortly.

Let us  explain what is meant by a local quantum operation. Let us consider a mixed quantum state $\rho$ of $\tq$ number of parties $A_1, \ldots,A_q$. A local quantum operation on, say party $A_a$, transforms $\rho$ into an ensemble of mixed states $\rho_i$ each appearing with probability $p_i$. 
\begin{align}
    \rho &\xrightarrow[]{\Lambda}  \Lambda(\rho)=\{ p_i, \rho_i\},\\
    p_i &= {\rm Tr} E^{(A_a)}_i \rho E^{(A_a)\dagger}_i, \qquad \rho_i = E^{(A_a)}_i \rho E^{(A_a)\dagger}_i/p_i.\notag
\end{align}
Here  $E^{(A_a)}_i$'s are linear operators on party $A_a$ which preserve the trace of the density matrix i.e. $\sum_i E^{(A_a)\dagger}_i E^{(A_a)}_i={\mathbb I}$. They are known as the Kraus operators. The local quantum operation on party $A_a$ can also be thought of as a two step process. First, a joint unitary operation on party $A_a$ and its environment $R$ and then a projective measurement on $R$. The measurement outcome $i$ occurs with probability $p_i$ and yielding the state $\rho_i$. 

If one thinks of entanglement as a resource, then it must not increase under a local quantum operation. This property is formalized as $\mu_{\rm avg}(\Lambda(\rho)) \leq \mu (\rho)$ where  $\mu_{\rm avg}$ is the weighted average of $\mu$ on the resulting ensemble i.e. $\mu_{\rm avg} (\Lambda(\rho))= \sum_i p_i \mu (\rho_i)$. It follows that $\mu$ must be invariant under local unitary transformation because such a transformation  is an invertible local quantum operation and  that $\mu$ is monotonic under local quantum operations. 

The individual parties can choose to dismiss the information about measurement outcomes and treat the ensemble $\Lambda(\rho)$ as a mixed state  with density matrix $\Lambda(\rho)_{\rm mixed}= \sum_i p_i \rho_i$. The entanglement monotone is expected to be non-increasing under such operation. This is modeled in \cite{Vidal_2000} as the convexity condition. 
\begin{align}
    \sum_i p_i \mu(\rho_i) \geq \mu (\sum_i p_i \rho_i).
\end{align}
It is pointed out in \cite{Plenio_2005} that the monotonicity under dismissal of information does not correspond to convexity but rather to  monotonicity under partial trace.   Neither of these conditions play an important role for us.

Before we formalize the definition of entanglement monotones  we need to introduce the notion of fully separable states. 
\begin{definition}[Fully separable state]
    A $\tq$-partite mixed state $\rho$ is called fully separable if it can be written as
    \begin{align}\label{sep}
        \rho =\sum_i p_i \, \rho_{A_1}^{(i)}\otimes \ldots \otimes \rho_{A_q}^{(i)}.
    \end{align}
    where $p_i>0$ and $\rho_{A_a}^{(i)},a=1,\ldots, q$ are local density matrices.
\end{definition}
\begin{definition}[Entanglement monotone]\label{em}
    A local unitary invariant function $\mu$ of the density matrix $\rho$ is called an entanglement monotone if the following conditions hold.
    \begin{enumerate}
        \item \rm{(monotonicity under unilocal operations)} 
        For all parties $A_a$, it should hold that 
        \begin{align}\label{locc-mono}
            &\mu_{\rm avg}(\Lambda(\rho)) \leq \mu (\rho), \quad
            \mu_{\rm avg} (\Lambda(\rho)):=\sum_i p_i \mu (\rho_i),\notag\\
            &\rho \xrightarrow[]{\Lambda}  \Lambda(\rho)=\{ p_i, \rho_i\},\notag\\
            &p_i = {\rm Tr} E^{(A_a)}_i \rho E^{(A_a)\dagger}_i, \, \rho_i = E^{(A_a)}_i \rho E^{(A_a)\dagger}_i/p_i.
        \end{align}
         $E^{(A_a)}_i$'s are linear transformations on party $A_a$ which preserve trace i.e. $\sum_i E^{(A_a)\dagger}_i E^{(A_a)}_i={\mathbb I}$.
        \item {\rm (convexity)} If $\rho=\sum_i p_i \rho_i$ then
        \begin{align}\label{convex}
            \mu(\rho) \leq \sum_i p_i \mu (\rho_i).
        \end{align}
        \item \rm{(faithfulness)}
        $\mu(\rho)=0$ if $\rho$ is fully separable. 
    \end{enumerate}
\end{definition}
\noindent
Note the distinction between the first two conditions. At first go, the first condition may look like a concavity condition and hence contradicting the second condition of convexity, but that is not so. The first condition is a statement about the value of the monotone \emph{after} the \locc\ operation $\Lambda$ while the second one is simply a convexity condition. In a sense, the first condition is more physical while the second is more mathematical. Also note that the first and the second condition together imply monotonicity under ``deterministic'' operations i.e.  
\begin{align}
    \mu(\Lambda(\rho)_{\rm mixed}) \leq \mu(\rho),\quad \Lambda(\rho)_{\rm mixed} := \sum_i p_i \rho_i
\end{align}
Here by $\Lambda(\rho)_{\rm mixed}$ we mean the mixed state corresponding to the ensemble $\{p_i,\rho_i\}$ where $p_i$ and $\rho_i$ are as defined in equation \eqref{locc-mono}.  The third condition implies $\mu(\rho)\geq 0$ because \locc\ operations on a fully separable state keep the state fully separable. 

As remarked earlier, not all \locc\ monotones are convex. Logarithmic negativity \cite{Plenio_2005} is one such example. Although we stick to the above definition of monotones for our review exposition, the convexity condition does not play any role in the derivation of our main results which are about restriction of entanglement monotones to pure states. 

The bi-partite entanglement monotones that agree with entanglement entropy on pure states are often called entanglement measures. This is due to the operational meaning of entanglement entropy as entanglement cost and also as distillable entanglement for bi-partite pure states. See \cite{Plenio:1998zr, schumacher2000relative, 10.5555/2011326.2011328, horodecki2009distillation, Plenio:2007zz} for a review of entanglement measures. 
For multi-partite states, understanding entanglement from an operational point of view is a difficult task. We only consider the axiomatic approach of quantifying entanglement using entanglement monotones defined above. 

\begin{restatable}[]{rem}{}
    If $\mu_i$'s are entanglement monotones then their convex combination $\sum_i p_i \mu_i$ is also an entanglement monotone. In other words, the set of entanglement monotones is convex. 
\end{restatable} 
\noindent In fact, entanglement monotones satisfy a stronger property \cite{Szalay_2015}.
\begin{restatable}[Szalay\cite{Szalay_2015}]{rem}{szalay}\label{szalay}
    If $G(x_1,\ldots, x_k)$ is a concave and monotonic function of its arguments such that $G(0,\ldots, 0)=0$, then $G(\nu_1, \ldots, \nu_k)$ is an entanglement monotone if all $\nu_i$s are entanglement monotones. 
\end{restatable}
\noindent The proof is reproduced in appendix \ref{review-proofs}.

The relevance of entanglement monotones to the problem of bounding the transition probability $p_{\rho\to \rho_*}$ under \locc\ can be seen thanks to the following theorem. We have included in the proof in appendix \ref{review-proofs} for the reader's convenience.  

\begin{restatable}[Vidal \cite{Vidal_2000}]{thm}{vidala}\label{bound-vidal}
    The maximal success probability $p_{\rho\to \rho_*}$ of converting a multipartite state $\rho$ to $\rho_*$ using \locc\ is
    \begin{align}\label{opt}
        p_{\rho\to \rho_*}=\min_{\mu}\frac{\mu(\rho)}{\mu(\rho_*)}.
    \end{align}
    where the minimization is performed over all entanglement monotones $\mu$.
\end{restatable}
\noindent

Constructing entanglement monotone directly from its  definition \ref{em} is difficult. Instead it is convenient to first construct what we call   \emph{pure state entanglement monotone} ({PSEM}) $\nu$. As the name suggests, it is defined only for pure states. The  definition is given in \ref{pure-em}. Note that this notion  is \emph{a priori} independent of entanglement monotones defined in \ref{em}. However a  {PSEM} $
\nu$ can be extended to mixed states using \emph{convex roof extension} yielding a full-fledged entanglement monotone ${\tt CR}_\nu(\rho)$. This is reviewed in appendix \ref{convex-roof}. 
This shows that even though the {PSEM} is  defined independently it ends up being restriction of an  entanglement monotone to pure states. The advantage of formalizing the notion of {PSEM} independently is that if we are interested in transition probability $p_{|\psi\rangle\to |\phi\rangle}$, where $|\psi\rangle$ and $|\phi\rangle$ are pure states then we need only concern ourselves with the {PSEM}s and not their convex roof extension to mixed states. Moreover we can still use the theorem \ref{bound-vidal} for {PSEM} $\nu(|\psi\rangle)$, thanks to the fact that it is restriction of an  entanglement monotone to pure states. What makes working with {PSEM}s even more convenient is their characterization in terms of concavity. 
\begin{restatable}[Vidal \cite{Vidal_2000}]{thm}{vidalb}\label{concave}
    A local unitary invariant function $f(|\psi\rangle)$ that is concave in the partial trace $\rho_{\bar{A_a}}:= {\rm Tr}_{A_a} |\psi\rangle\langle \psi|$ for all parties $A_a, a=1,\ldots, q$ is a {PSEM}.
\end{restatable}
\noindent 
A local unitary invariant function $f$ of the $q$-partite pure state $|\psi\rangle$ can equivalently be thought of as a function of $\rho_{\bar{A_a}}$ for any $a=1,\ldots, q$. It is defined by purifying $\rho_{\bar{A_a}}$ to $|\tilde \psi\rangle$ and evaluating the function $f(|\tilde \psi\rangle)$. Because all the purifications are related by local unitary on the purifier and because the function $f$ is local unitary invariant, $f(|\tilde \psi\rangle)=f(|\psi\rangle)$. More on this in section \ref{psem}.   
The proof is reviewed in appendix \ref{review-proofs}.
In this paper, we will construct {PSEM}s for multipartite states and the concavity theorem \ref{concave} will be our main workhorse for doing so.  

Although, the question of entanglement monotones for bi-partite states has been explored widely in the literature, the same for multi-partite states has received relatively less attention. Some multi-partite entanglement monotones are known in the literature (see \cite{Horodecki_2009, Szalay_2015} for a review) but a general theory is lacking. Our aim in this paper is to remedy this situation and develop a theory of multi-partite monotones of a certain type. 
Let us summarize the main results of the paper. In this paper we consider local unitary invariant polynomials of the state and its conjugate. Such a polynomial invariant was  termed \emph{multi-invariant} in \cite{Gadde:2024taa}. We stick to this nomenclature. As we will discuss in section \ref{new}, $\tq$-partite multi-invariants are characterized by uniformly edge-labeled bi-partite graph i.e. a bi-partite graph whose every vertex neighborhood consists of $\tq$ edges with the same set of distinct $\tq$ labels. We call such graphs $\psi$-graphs. Such graphs occur naturally in group theory. Specifically, Cayley graph of a group with $\tq$ involutive generators is a uniformly edge-labeled graph with $\tq$ edge-labels, each edge-label corresponding to a generator. If all the relation are even-length words, then such a graph is also bi-partite and hence a $\psi$-graph. 
Of course not all $\psi$-graphs are Cayley graphs. We will denote the $\psi$-graph as well as associated multi-invariant with a calligrphic letter such as $\CZ$. A normalized multi-invariant ${\hat \CZ}$ is defined as $\CZ^{1/n_\CZ}$ where $n_\CZ$ is the number of black (or white) vertices in the $\psi$-graph $\CZ$. 

\subsection{Main results}
Motivated by theorem  \ref{szalay}, we have
\begin{definition}\label{primitive}
    If a {PSEM} can be written as $G(\nu_1, \ldots, \nu_k)$ where $G$ is a concave and monotonic function of {PSEM}s $\nu_i$ then we call it a composite of $\nu_i$s otherwise we call it a primitive {PSEM}.
\end{definition}
\noindent 
We prove the following general theorem about entanglement monotones 
that shows composites do not give better bounds on transition probability than the associated primitives.
\begin{restatable}{thm}{composite}\label{composite}
    If $G(x_1,\ldots, x_k)$ is a positive, monotonic and concave function of its arguments $x_i\geq 0$ in some domain then in that domain,
    \begin{align}\label{comp}
        \frac{G(x_1,\ldots, x_k)}{G(x'_1,\ldots, x'_k)} \geq \min \Big(
            \frac{x_1}{x'_1},\ldots, \frac{x_k}{x'_k},1
            \Big).
    \end{align}
\end{restatable}
\noindent 
The proof is presented in appendix \ref{proof-theorem1}. 
This theorem is useful because it makes it clear that we do not need to explore the bounds on transition probabilities obtained from composites of {PSEM}s that we construct. We prove another general result that constructs new {PSEM} from a given {PSEM} in conjunction with theorem \ref{concave}. 
\begin{restatable}{thm}{generalproj}\label{general-proj}
    Let $f(\rho_{\bar{A}})$ be a linearly homogeneous convex function of $\rho_{\bar{A}}$ that is invariant under local unitary transformations. Then the function 
    \begin{align}
        \tilde f_{k_1,\ldots k_\tq}(|\psi\rangle) :=\max_{P_{k_1,\ldots, k_\tq}} f(P_{k_1,\ldots, k_\tq} |\psi\rangle)
    \end{align}
    also is a linearly homogeneous convex function of $\rho_{\bar{A}}$ that is invariant under local unitary transformations.
\end{restatable}
\noindent 
We use the notation $P_{k_1,\ldots, k_\tq}=P_{k_1}\otimes \ldots \otimes P_{k_\tq}$ where $P_{k_{a}}$ is a $k_{a}$ dimensional projector acting on party $A_a$. The action of a projector on a normalized state $|\psi\rangle$ results in an  unnormalized state. The function $f(|\psi\rangle)$ is extended to such states using its linear homogeneity i.e. $f(\alpha|\psi\rangle)=\alpha f(|\psi\rangle)$.
The proof of this theorem is presented in appendix \ref{proof-theorem1}. This is a standalone result, it hasn't been explored further in this paper.

The rest of our results are about the construction of {PSEM}s from multi-invariants. Let us define $\nu (\CZ):=1-\CZ$ and ${\hat \nu}(\CZ) := 1-{\hat \CZ}$. Then we have,
\begin{restatable}{thm}{theorempsem}\label{theorem-psem}
    ${\hat \nu}(\CZ)$ is a {PSEM} if $\CZ$ is connected and  edge-convex.
\end{restatable}
\noindent The proof is given in section \ref{strengthen}. 
Note that if $\hat \nu(\CZ)$ is a {PSEM} then so is $\nu(\CZ)$ because the latter is a composite of the former. Edge-convexity is a  property of $\psi$-graphs that  is defined in \ref{edge-convex-def}. 

If $\CZ_i$ is a $\tq_i$-partite $\psi$-graph then we define an operation -- similar to the cartesian product of graphs -- that we call the box product $\CZ_1{\hat \Box}\CZ_2$. This box product  is a $\tq_1+\tq_2$-partite $\psi$-graph. We prove
\begin{restatable}{thm}{cartesian}\label{cartesian}
    If $\CZ_1$ and $\CZ_2$ are both edge-convex then $\CZ_1{\hat \Box}\CZ_2$ is edge-convex.
\end{restatable}
\noindent
This gives us a simple way to construct edge-convex $\psi$-graphs. It turns out straightforwardly that ``1-partite'' $\psi$-graph ${\cal E}_1$ corresponding to the norm of the state is edge-convex. We also show explicitly that the bi-partite $\psi$-graph $C_n$ corresponding to ${\rm Tr}\rho^n$ is edge-convex. Combining these results with theorem \ref{cartesian}, we conclude that the $\psi$-graph
\begin{align}
    {\cal E}_{m;n_1, \ldots, n_k}:={\cal E}_1^{{\hat \Box} m} {\hat \Box} C_{n_1}{\hat \Box} \ldots {\hat \Box} C_{n_k}
\end{align}
is edge-convex. The number of parties in this $\psi$-graph is $m+2(n_1+\ldots+n_k)$. The logarithm of ${\hat {\cal E}}_{m; \emptyset}$ is the so called $n=2$ Renyi multi-entropy studied in \cite{Gadde:2022cqi}. Also, the analytic continuation of case ${\cal E}_{1;n}$ to $n=1/2$ yields what is known as computable cross-norm.  

Let us emphasis the broader significance of our results. Multipartite entanglement is central to distributed quantum information processing, yet the space of entanglement monotones beyond the bipartite setting remains poorly charted. This work provides a systematic route to constructing multipartite pure-state entanglement monotones with direct operational meaning: each monotone yields an explicit upper bound on the success probability of stochastic LOCC conversion between pure states (when such a conversion is possible). Our construction starts from multiplicative local-unitary invariant polynomials, called multi-invariants and introduces a graph-theoretic  condition - edge-convexity - that guarantees the associated normalized invariant produces a monotone. The resulting family is algebraically explicit (polynomial in state amplitudes and their conjugates), naturally composable via the box-product, and connects entanglement theory to the combinatorics of edge-labeled bipartite graphs underlying LU invariants.

The multi-invariants are expressed as expectation values of permutation operators acting on multiple copies,
$Z_{\vec\sigma}=\mathrm{Tr}\!\left[\rho^{\otimes k}U_{\vec\sigma}\right]$,
where $U_{\vec\sigma}$ is a tensor product of party-wise permutations across the $k$ copies.
Consequently, they are, in principle, estimable without full tomography using standard multi-copy measurement primitives.
A direct route is \emph{ancilla-assisted interferometry}  implementing a controlled-$U_{\vec\sigma}$,
in which measurements of the control in the $X/Y$ bases yield
$\mathrm{Re}\,\mathrm{Tr}(\rho^{\otimes k}U_{\vec\sigma})$ and
$\mathrm{Im}\,\mathrm{Tr}(\rho^{\otimes k}U_{\vec\sigma})$,
generalizing the controlled-SWAP network for nonlinear state functionals~\cite{Ekert:2002qtj, LeiferLindenWinter2004MeasuringPolynomialInvariants}.
Related multi-copy \emph{interference protocols} have been realized experimentally to access SWAP-type expectation values and R\'enyi entropies~\cite{Islam2015MeasuringEntanglementEntropy}.
Alternatively, when controlled permutations are experimentally demanding, \emph{randomized measurement} schemes based on local unitary designs reconstruct such multi-copy moments from correlations of single-copy measurement outcomes~\cite{Elben2018RenyiRandomQuenches,Brydges2019ProbingRenyiRandomized};
complementary ``classical shadows'' methods provide a broader framework for predicting many nonlinear properties from few randomized measurements~\cite{Huang2020ClassicalShadows}.
Since our entanglement monotones are explicit functions of normalized multi-invariants, these tools provide practical ways to estimating the monotones directly from state.

The rest of the paper is organized as follows. In  section \ref{psem}, we will define pure state entanglement monotone ({PSEM}) and characterize it as a concave function of partially traced density matrix. We will also review the known bi-partite and multi-partite {PSEM}s.
Section \ref{new} contains our new results about multi-partite {PSEM}s, constructed from multi-invariants.
We first develop a graph theoretic language to deal with local unitary invariant functions of the state and formulate the condition of concavity in theorem \ref{concave} as the ``edge-convexity''  condition on the associated graph. We explicitly show the edge-convexity of ${\cal E}_{2;\emptyset}, {\cal E}_{3;\emptyset}$ and $\CC_n$. 
Then we present the theorem \ref{cartesian}. 
Finally,  we demonstrate the effectiveness of these new monotones with the help of an example. In the last section, section \ref{sec:conclusions}, we give conclusions and list some future research directions.

\section{Pure state entanglement monotones}\label{psem}
In this section, we will define {PSEM}  and   describe its  characterization  as a concave function of partially traced density matrix. Then we will take stock of the known bi-partite and multi-partite {PSEM}s culminating in the statement of theorem \ref{general-proj}.

Entanglement monotones are difficult to construct in general. A strategy to construct monotones for bi-partite systems was given in \cite{Vidal_2000}. This strategy can be generalized to multi-partite systems as well. It is reviewed in \cite{10.5555/2011326.2011328}.  The idea is that one starts with a function $\nu(|\psi\rangle)$ that is defined only on \emph{pure states} and then extend it to mixed states using convex roof (see appendix \ref{convex-roof}) to get an entanglement monotone. 
\begin{definition}[Pure state entanglement monotone ({PSEM})]\label{pure-em}
    Let $|\psi\rangle \to \Lambda(|\psi\rangle)=\{p_i, |\psi_i\rangle\}$. A pure state entanglement monotone  $\nu(|\psi\rangle)$ is defined as obeying the following properties for all parties $A_a$,
    \begin{align}\label{locc-mono-pure}
        &\nu (|\psi\rangle) \geq \sum_i p_i \nu (|\psi_i\rangle), \\
        &{p_i}:=|E^{(A_a)}_i |\psi\rangle|^2, \quad|\psi_i\rangle := E^{(A_a)}_i |\psi\rangle /\sqrt{p_i},\notag
    \end{align}
    where $E^{(A_a)}_i$ act on party $A_a$ and obey the trace preserving condition $\sum_i E_i^{(A_a)\dagger}E_i^{(A_a)\dagger}={\mathbb I}$.  $\nu(|\psi\rangle)=0$ for fully separable states.
\end{definition}
\noindent Fully separable pure states are fully factorized i.e. $|\psi\rangle=|\psi_{A_1}\rangle \otimes\ldots\otimes  |\psi_{A_q}\rangle$.
The fact that all {PSEM}s can be extended to entanglement monotones allows for the use of {PSEM}s to bound pure state transition probabilities. In particular, the optimal success probability $p_{|\psi\rangle\to |\phi\rangle}$ of converting a multi-partite pure state $|\psi\rangle$ to another multi-partite pure state $|\phi\rangle$ using \locc\ is 
\begin{align}
    p_{|\psi\rangle\to |\phi\rangle} = \min_{\nu} \frac{\nu(|\psi\rangle)}{\nu(|\phi\rangle)}. 
\end{align} 
where the minimization is performed over the set of all {PSEM}s.

\begin{definition}\label{complete}
    A set ${\cal K}$ of {PSEM}s is called complete if
    \begin{align}
        p_{|\psi\rangle\to |\phi\rangle}=\min_{\nu\in {\cal K}}\frac{\nu(|\psi\rangle)}{\nu(|\phi\rangle)} \qquad \qquad \forall\,\, |\psi\rangle, |\phi\rangle.
    \end{align}
    A set that is a subset of all complete sets is called the minimal complete set.
\end{definition}
\noindent
It is clear that all the primitive {PSEM}s form a complete set and that the {PSEM}s in a minimal complete set must be primitive. 
For bi-partite pure states, Vidal computed the optimal probability by constructing a \emph{minimal complete} set of {PSEM}s. We will review this in section \ref{bi-psem}. Characterizing the minimal set for multi-partite pure states is a difficult exercise, one which we will not undertake in this paper.

Vidal gave a simple and elegant way of constructing {PSEM}s. 
\vidalb*
\noindent 
As remarked earlier, because of local unitary invariance, $f(|\psi\rangle)$ can be equivalently thought of as a function of the partial trace  $\rho_{\bar{A_a}}$ for any party $A_a$. 
Let us see why that is the case in detail.
Given a $\rho_{\bar{A_a}}$ on $\tq-1$ parties, we purify it to a pure state $|\tilde \psi\rangle$ on $\tq$ parties by essentially reintroducing the traced out party $A_a$. Let us denote the new party as $\tilde A_a$. The dimension of $H_{\tilde A_a}$ needs to be at least as large as the rank of $\rho_{\bar{A_a}}$ but it can be larger.   It is well-known that all such purifications are related to each other by a local unitary transformation on $\tilde A_a$. Now we define $f(\rho_{\bar{A_a}})=f(|\tilde \psi\rangle)$ and because $f(|\psi\rangle)$ is a local unitary invariant function, $f(\rho_{\bar{A_a}})$ doesn't depend on the purification and hence is uniquely defined.
By abuse of notation, we will denote such a function either as $f(|\psi\rangle)$ or as $f(\rho_{\bar{A_a}})$, according to convenience.

The main point of the paper is to construct {PSEM}s for multi-partite states using the theorem \ref{concave}.

\subsection{For bi-partite states}\label{bi-psem}
Because the local unitary invariant data in the partial trace $\rho_{\bar{A}}$ of a bi-partite state are its eigenvalues $\lambda_i$ and because $\lambda_i$ are homogeneous in $\rho_{\bar{A}}$, {PSEM}s are constructed using concave functions of $\lambda_i$, thanks to theorem \ref{concave}. Also note that the eigenvalues $\lambda_i$ are symmetric with respect to both parties as required by the theorem \ref{concave}.
In \cite{Vidal_1999}, Vidal identified a minimal complete set of {PSEM}s that is finite. Arranging the nonzero eigenvalues of $\rho_{\bar{A}}$ in descending order, $\lambda_1\geq \lambda_2 \geq \ldots$, define
\begin{align}
    \tilde \nu_k^{V} = 1-\sum_{i=1}^k \lambda_i.
\end{align}
It turns out that 
\begin{restatable}[Vidal \cite{Vidal_1999}]{thm}{vidalc}\label{kyfan}
    $\tilde \nu_k^V$ are {PSEM}s.
\end{restatable}
\noindent
The proof is reviewed in appendix \ref{review-proofs}. 
Moreover, the set ${\cal K}=\{\tilde \nu_k^{V}: k=1,\ldots, d\}$, where $d$ is the number of nonzero eigenvalues of $\rho_{\bar{A}}$, is complete. Vidal proved the completeness of ${\cal K}$ by giving an explicit  conversion protocol $|\psi\rangle \to |\phi\rangle $ that realizes the success probability obtained by minimizing over ${\cal K}$. 

Note that when 
\begin{align}\label{maj}
    \min_k \,\frac{\tilde \nu_k^V(|\psi\rangle)}{\tilde \nu_k^V(|\phi\rangle)} \geq 1,
\end{align}
Vidal's theorem states that it is possible to convert $|\psi\rangle$ to $|\phi\rangle$ with unit probability. This agrees beautifully with Nielsen's result \cite{Nielsen:1999zza} that a complete conversion of $|\psi\rangle$ to $|\phi\rangle$ is possible using \locc\ if and only if eigenvalues of $\rho_{\bar{A}}(|\psi)$ majorize eigenvalues of $\rho_{\bar{A}}(|\phi\rangle)$. The condition \eqref{maj} is simply a rewriting of Nielsen's majorization condition.  

It is useful to write an expression for $\tilde \nu_k^V$ that is manifestly symmetric in both parties. For this, we note that
\begin{align}
    \max_{P_{k_1,k_2}} |P_{k_1,k_2}|\psi\rangle|^2 = \sum_{i=1}^{{\rm min}(k_1,k_2)} \lambda_i.
\end{align}
Here $P_{k_1,k_2}:= P_{k_1}\otimes P_{k_2}$ where $P_{k_1}$ and $P_{k_2}$ are the rank $k_1$ and rank $k_2$ projectors on both the parties respectively. So we define another form of the same monotones,
\begin{align}\label{kyfan2}
    \nu_{k_1,k_2}^V:= 1- \max_{P_{k_1,k_2}} |P_{k_1,k_2}|\psi\rangle|^2=\tilde \nu_{{\rm min}(k_1,k_2)}^V.
\end{align}
Here we have introduced redundancy by writing the monotone as a function of two integers $k_1$ and $k_2$ although it is only a function of ${\rm min}(k_1,k_2)$. This symmetric form allows generalization to higher parties as shown in \cite{Barnum_2001}. We discuss their multi-partite {PSEM} shortly. 

\subsection{For multi-partite states}\label{multi}
Before we start constructing {PSEM}s for multi-partite states, let us look at the problem of conversion $|\psi\rangle \to |\phi\rangle$. In general, this conversion is possible only if $|\psi\rangle\langle\psi|$ is related to $|\phi\rangle\langle\phi|$ by a stochastic version of local operation and classical communication (\slocc) \cite{D_r_2000}. Mathematically an \slocc\ is expressed as,
\begin{align}\label{slocc}
    &|\psi\rangle\to |\phi\rangle := M_{A_1}\otimes \ldots \otimes M_{A_\tq} \, |\psi\rangle|_{\rm normalized},\notag\\
    &M_{A_a}^{\dagger} M_{A_a} \preceq {\mathbb I}\quad {\rm for}\quad a=1,\ldots, q.
\end{align}
The notation $M\preceq {\mathbb I}$ implies that the matrix ${\mathbb I}-M$ is positive definite. 
Indeed this has to be the case if $|\phi\rangle$ were to appear as one of the possible mixed states under local operations. 
The reason for the condition $M_{A_a}^{\dagger} M_{A_a} \preceq {\mathbb I}$ is that $M_{A_a}$ belongs to a set of Kraus operators, in this case a pair, the other being ${\tilde M}_{A_a}:=\sqrt {{\mathbb I}-M_{A_a}^{\dagger} M_{A_a}}$. 
In the bi-partite case, any two density matrices (of equal rank) are related to each other in this way but that is not always the case for multi-partite density matrices or even for multi-partite pure states. If they are not in the same \slocc\ orbit then the success probability of converting either $|\psi\rangle \to |\phi\rangle$ or $|\phi\rangle \to |\psi\rangle $ is zero. Hence the problem of bounding conversion success only makes sense if $|\psi\rangle$ and $|\phi\rangle$ are in the same \slocc\ orbit i.e. if $|\phi\rangle= M_{A_1}\otimes \ldots \otimes M_{A_\tq} \, |\psi\rangle$. In the rest of the paper, we will assume that that is the case. Whether given two states belong to the same \slocc\ orbit can be checked by computing \slocc\ invariant quantities. Such quantities are constructed by contracting the indices of $|\psi\rangle$ using only the invariant $\epsilon$ tensor for each party. 

This discussion also suggests a particular protocol of transformation $|\psi\to |\phi\rangle$ thereby giving a lower bound on the transformation probability. The protocol consists of the action of a pair of Kraus operators  $M_{A_a},{\tilde M}_{A_a}$ each party $A_a$. In this case, $|\phi\rangle \langle\phi|$ definitely appears as one of the $2^{\tq}$ terms in the ensemble (it may also appear elsewhere in the sum as well). Its coefficient is 
\begin{align}\label{lowerb}
    p_*={\rm Tr}M_{A_1}\otimes \ldots \otimes M_{A_\tq} \, |\psi\rangle \langle\psi| \, M_{A_a}^{\dagger}\otimes \ldots \otimes M_{A_\tq}^{\dagger}. 
\end{align}
This gives us a lower bound on $p_{|\psi\rangle \to |\phi\rangle}$. 
In order to get the best lower bound for this protocol, we take all the $M_{A_a}$'s to be normalized such that their maximum singular value is $1$. This allows the inequality $M_{A_a}^{\dagger} M_{A_a} \preceq {\mathbb I}$ to be satisfied while giving the best lower bound for this protocol. Of course, there might exist other protocols which allow for a more efficient conversion of $|\psi\rangle $ to $|\phi\rangle$. 

Now we give some examples of {PSEM}s for multi-partite states. 

\subsection*{{PSEM} of \cite{PhysRevA.68.012103}}
In \cite{PhysRevA.68.012103}, the authors showed that any linearly  homogeneous positive function of $\rho=|\psi\rangle\langle \psi|$ that is invariant under local special linear  transformations i.e. under transformation
\begin{align}
    &|\psi\rangle\to |\phi\rangle := M_{A_1}\otimes \ldots \otimes M_{A_\tq} \, |\psi\rangle|_{\rm normalized},\notag\\
    &{\rm with}\qquad {\rm Det} \,M_{A_a}=1 \quad{\rm for}\quad a=1, \ldots, q.
\end{align}
is a {PSEM}. Because of this invariance, for a given \slocc\ orbit the expression for the monotone simplifies to
\begin{align}\label{det-psem}
    \nu_{\rm Det}(|\psi\rangle) = \kappa\,  \langle \psi| \psi\rangle^{-1}
\end{align}
where $|\psi\rangle$ is an un-normalized state obtained from some reference state $|\psi_0\rangle$ in the same \slocc\ orbit by determinant $1$ operations and $\kappa$ is a constant that only depends on the orbit.

\subsection*{{PSEM} of \cite{PhysRevA.64.022306}}
In \cite{PhysRevA.64.022306} the authors defined the Schmidt measure ({\tt SM}) of a multi-partite state,
\begin{definition}[Schmidt measure]
    {\tt SM} is the $\log_2$ of the minimum value of $k$ such that 
    \begin{align}
        |\psi\rangle = \sum_{i=1}^k c_i |\psi\rangle_{A_1}^{(i)}\otimes \ldots \otimes |\psi\rangle_{A_q}^{(i)}.
    \end{align}
\end{definition}
\noindent They showed that {\tt SM} is a {PSEM}. 

\subsection*{{PSEM} of \cite{Barnum_2001}}
A natural generalization of Vidal's monotones for multi-partite case was constructed in \cite{Barnum_2001} using the Ky Fan's maximum formulation as presented in equation \eqref{kyfan2}. Consider
\begin{align}\label{bl}
    &\nu_{k_1,\ldots, k_\tq}^{BL}:= 1-\max_{P_{k_1,\ldots,k_\tq}} |P_{k_1,\ldots,k_\tq}|\psi\rangle|^2,\notag\\
    & P_{k_1,\ldots,k_\tq}=P_{k_1}\otimes \ldots \otimes P_{k_\tq}.
\end{align}
\noindent
It was shown in \cite{Barnum_2001} that $\nu_{k_1,\ldots, k_\tq}^{BL}$ are multi-partite {PSEM}s. 

We will now state a powerful theorem which can be specialized to prove that $\nu^{BL}$ are {PSEM}s. 
Given a convex function of partial traces, we will construct a new convex function of partial traces. This function can be used to define a {PSEM}. 
\generalproj*
\noindent The proof is given in appendix \ref{proof-theorem1}. 
If we take $f(\rho_{\bar{A}}) = {\rm Tr} \rho_{\bar{A}}$. Then the above theorem shows that the measure \eqref{bl} introduced in \cite{Barnum_2001} is indeed a {PSEM}.

Unfortunately almost all of the existing  multi-partite {PSEM}s involve optimization and are  difficult to compute. 
In the rest of the paper, we will construct  multi-partite {PSEM}s that are easier to compute. Interestingly, they will be constructed from convex functions of $\rho_{\bar{A}}$ that are linearly homogeneous, hence they are amenable to the generalization discussed in theorem \ref{general-proj}.

\section{{PSEM}s from polynomial invariants}\label{new}
In this section, we will construct a set of {PSEM}s for multi-partite states from local unitary invariant polynomials of the wave-function and its complex conjugate. In particular, we will construct the local unitary invariants that are convex in partially traced density matrix $\rho_{\bar{A}}$ for each party $A$. They will lead us to {PSEM}s thanks to theorem \ref{concave}. Local unitary invariant functions of a multi-partite quantum states have been explored widely. See \cite{vrana2011alge, Szalay_2012, Vrana_2012, Vrana_2011, hero2011measure, HERO20096508, Rains2000PolynomialInvariantsQuantumCodes, BrylinskiBrylinski2002InvariantPolyQuditsChapter, GrasslRoettelerBeth1998ComputingLocalInvariants} for their enumeration and construction. 

Let $|i_a\rangle, \, i_a=\1,\ldots, d_{A_a}$ be a basis for Hilbert space $H_{A_a}$. A state in the tensor product $\bigotimes_{a=1}^q H_{A_a}$ is written as 
\begin{align}
    |\psi\rangle = \sum \, \psi_{i_\1,\ldots, i_\tq}\, |i_\1\rangle \otimes \ldots\otimes |i_\tq\rangle.
\end{align}
The components $\psi_{i_\1,\ldots, i_\tq}$ is the wavefunction of the state $|\psi\rangle$ in the chosen basis. The index $i_a$ transforms in the fundamental representation of the unitary group acting on party $A_a$. The conjugate wavefunction is ${\bar\psi}^{i_\1,\ldots, i_\tq}$. Its indices transform in the anti-fundamental representation. 
Invariants of local unitary transformations are constructed by taking, say $n_r$  copies of $\psi$ and $n_r$ copies of $\bar \psi$ and contracting the fundamental indices of $\psi$'s with the anti-fundamental indices of $\bar \psi$'s as dictated by permutation elements of $S_{n_r}$ associated with each party. More explicitly,
\begin{align}\label{general-multi}
    \CZ(|\psi\rangle)= \langle\psi^{\otimes n_r}| (\sigma_{A_1}\otimes \ldots\otimes\sigma_{A_q}) |\psi^{\otimes n_r}\rangle
\end{align}
where $\sigma_{A_a}\in S_{n_r}$ is a permutation operator acting on $n_r$ copies of party $A_a$ and so on. The local unitary invariant thus defined is called multi-invariant \cite{Gadde:2024taa}. As this function is defined as the  expectation value of a unitary operator on multiple copies of the state, it can even be measured using the protocol of \cite{Ekert:2002qtj, LeiferLindenWinter2004MeasuringPolynomialInvariants}. In what follows below, we will sometimes label the parties as $A,B,C,\ldots$ rather than $A_1,A_2,A_3,\ldots$ to avoid clutter. The index for party $A$ is then denoted as $i_A$.

It is convenient to use a graphical notation to describe multi-invariants. 
Let us denote a state $\psi_{i_A,i_B,i_C,\ldots}$ (its complex conjugate $\bar\psi_{i_A,i_B, i_C,\ldots}$) as a white (black) $\tq$-valent vertex. 
Each edge has a label of one of the $\tq$-parties. The edge corresponding to party $A$ is called an $A$-edge and so on. This notation is illustrated in figure \ref{psi-notation}. 
In the figures, we denote the edge label using a color that is not black or white. 
\begin{figure}[h]
    \begin{center}
        \includegraphics[scale=0.25]{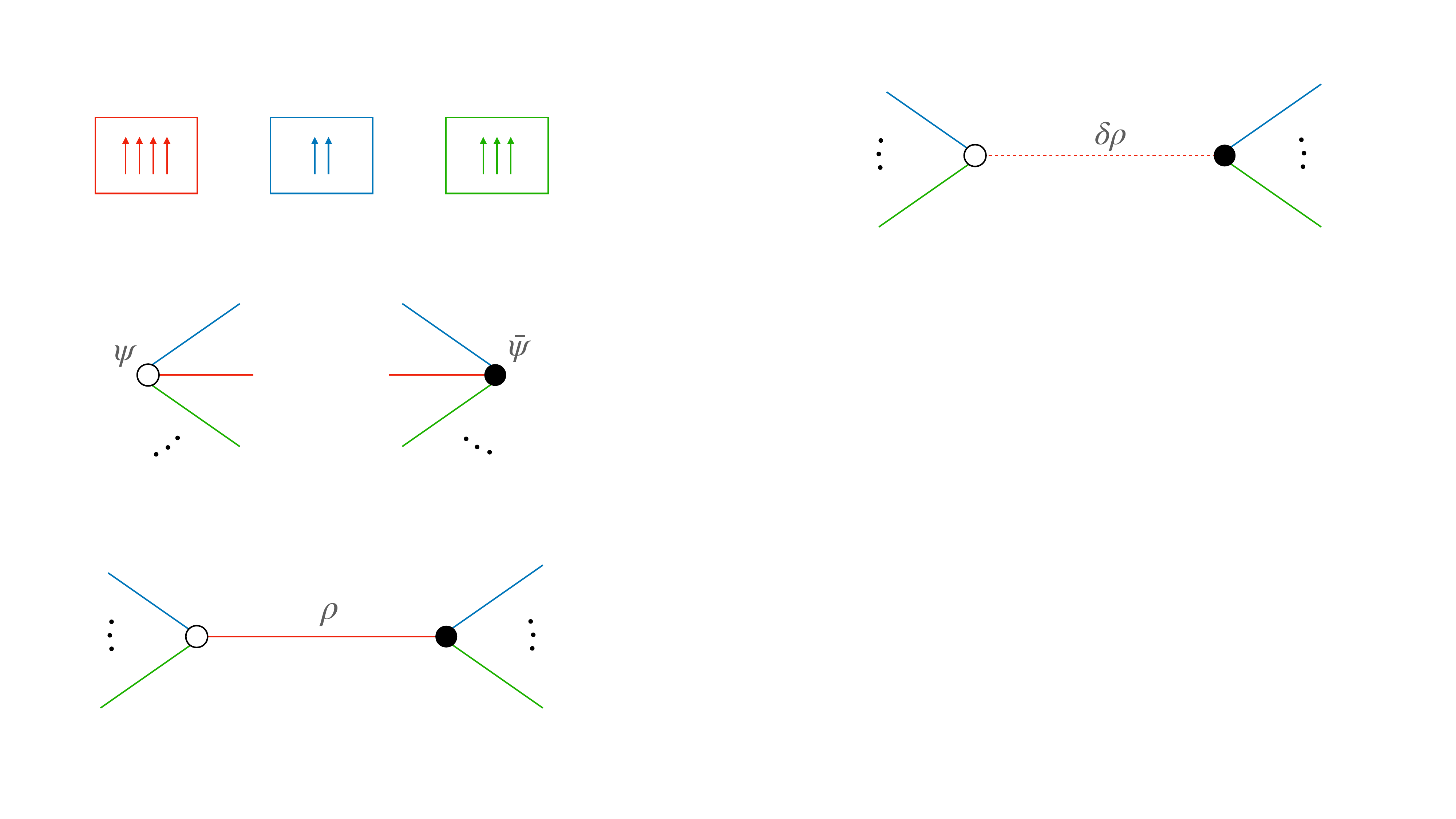}
    \end{center}
    \caption{White (black) vertex denoting $\psi$ ($\bar \psi$). The parties are labeled by colored edges. }\label{psi-notation}
\end{figure}
Whenever an index $i_A$ of a pair of $\psi$ and $\bar \psi$ is contracted, we connect the two corresponding vertex with $A$-edge and so on. If all the edges are contracted, the graph represents a local unitary invariant and if some edges are left unconnected then the open graph represents a tensor that transforms appropriately under local unitary transformations as indicated by the uncontracted indices. The multi-invariant in equation \eqref{general-multi} is obtained by connecting $A$-edge of $\alpha$-th white vertex to $(\sigma_{A}\cdot \alpha)$-th black vertex for all $A$. In this way, a multi-invariant is given by a bi-partite, $\tq$-color-regular graph. We call such a graphical representation of the multi-invariant a  $\psi$-\emph{graph}. Figure \ref{example} shows an example of a $\psi$-graph. 
\begin{figure}[h]
    \begin{center}
        \vspace{0.5cm}
        \includegraphics[scale=0.3]{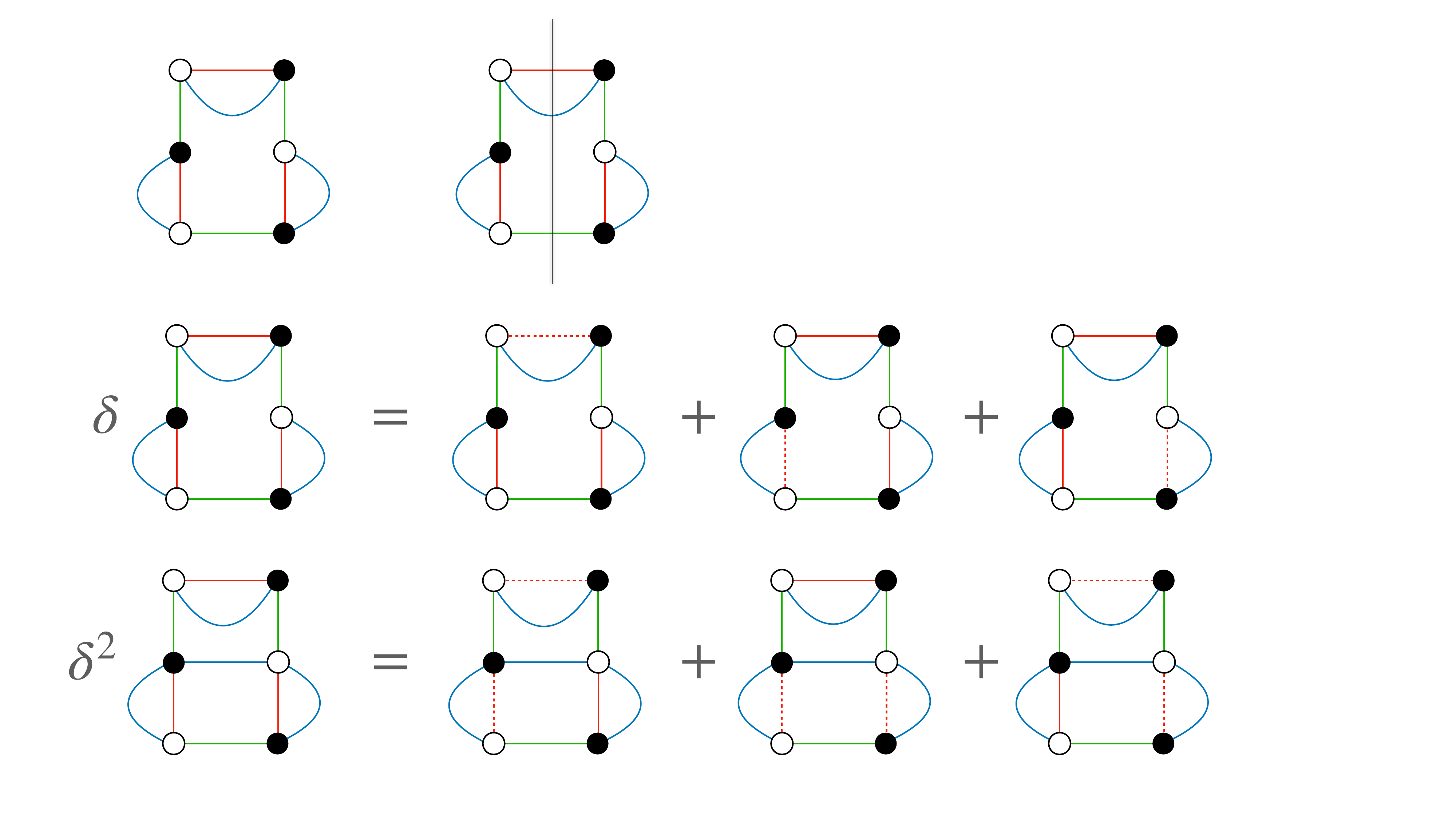}
    \end{center}
    \caption{Example of a $\psi$-graph constructed from three copies of $\psi$ and $\bar \psi$ each by connecting edges of identical colors.}\label{example}
\end{figure}
We use the calligraphic letter, such as $\CZ$, denoting the multi-invariant to also denote the associated $\psi$-graph as well. Multi-invariants obeys the factorization property
\begin{remark}\label{factor}
    \begin{align}
        \CZ(|\psi_1\rangle\otimes |\psi_2\rangle)= \CZ(|\psi_1\rangle)\CZ(|\psi_2\rangle).
    \end{align}
    Here $|\psi_1\rangle$ and $|\psi_2\rangle$ are $\tq$-partite states and their tensor product is also thought of as a $\tq$-partite state with  party $A$ of $|\psi_1\rangle\otimes |\psi_2\rangle$ being the tensor product of  party $A$ in $|\psi_1\rangle$ and  party $A$ in $|\psi_2\rangle$.
\end{remark} 
\begin{remark}\label{smaller}
    For a normalized state $|\psi\rangle$,
    \begin{align}
        |\CZ(|\psi\rangle)|\leq 1,
    \end{align} 
    with equality holding for fully factorized state. 
\end{remark}
\noindent 
This is because the tensor product of permutation operators $\sigma_A\otimes \sigma_B\ldots$ is a unitary operator. Then  remark \ref{smaller} holds due to Cauchy-Schwarz inequality. 

For future convenience, it is useful to define normalized multi-invariant $\hat \CZ:=\CZ^{1/n_r}$ where $n_r$ is  the number of white (or black) vertices in the graph $\CZ$. Obviously remarks \ref{factor} and \ref{smaller} are also valid for the normalized multi-invariants.  

In what follows, we will discuss a special class of multi-invariants called symmetric multi-invariants. We will relate their $\psi$-graphs to certain types of Cayley graphs. For this purpose, it is convenient to think of the un-oriented colored edges as a pair of oppositely oriented edges.

\subsubsection{Reflection symmetry and positivity}
In this section we will discuss the properties of $\psi$-graph that is reflection symmetric.  
Let us first recall that a graph cut of a connected graph is a subset of edges after removing which, the graph becomes disconnected. From now on, without loss of generality, we will assume $\CZ$ is connected. The reason for this is explained below theorem \ref{theorem-psem}. 
\begin{definition}
    An edge subset $E$ of a $\psi$-graph $\CZ$ is called a reflecting cut if there exists an automorphism $k$ that flips vertex color such that
    \begin{enumerate}
        \item For every edge $uv\in E$, $k(v)=u$ and $k(u)=v$.
        \item $\CZ-E$ consists of two components $\CT_1$ and $\CT_2$ that are disconnected from each other and are mapped to each other by $k$.
    \end{enumerate}
\end{definition}
\noindent 
The $\psi$-graph in figure \ref{example} admits a reflecting cut. It is shown in figure \ref{example-cut}.
\begin{figure}[h]
    \begin{center}
        \includegraphics[scale=0.3]{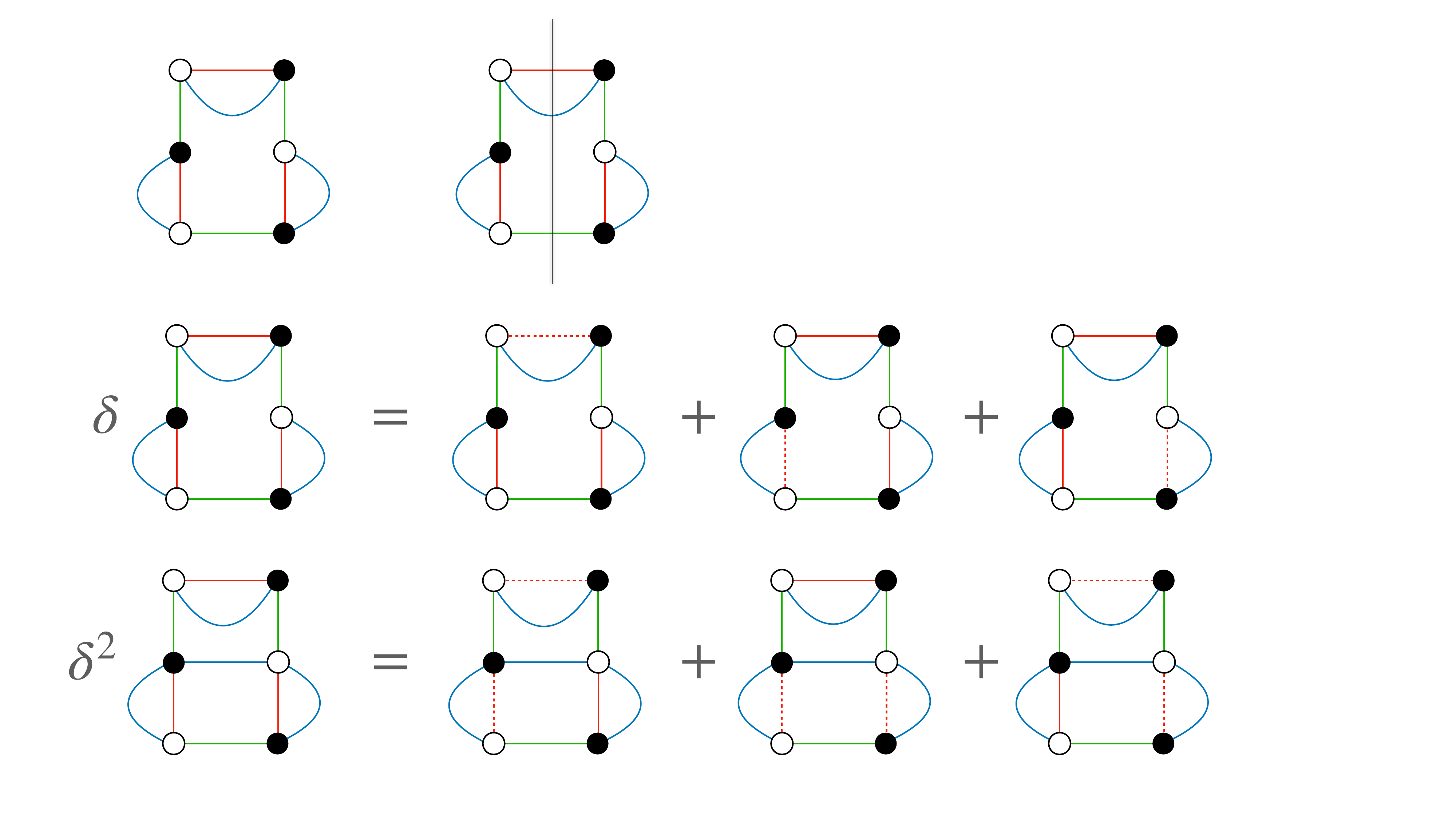}
    \end{center}
    \caption{The reflecting cut is shown by a straight black line passing through the graph. It is easy to see that the graph is symmetric under the reflection across the reflecting cut, after vertex color flip.}\label{example-cut}
\end{figure} 

\begin{remark}\label{reflecting-cut}
    If a $\psi$-graph $\CZ$ admits a reflecting cut then $
    \CZ$ is positive. 
\end{remark}
\noindent This can be seen as follows. Consider the two graphs obtained after the reflecting cut. Restoring the cut edges on each of them separately but not joining them gives us a pair of open graphs that represents tensors $|\CT_1\rangle$ and $| \CT_2\rangle$  that are complex conjugates of each other i.e. $| \CT_2\rangle=| {\bar \CT}_1\rangle$. The original $\psi$-graph $\CZ$ is obtained by connecting the  indices that are images of each other. This gives the presentation of $\CZ$ as the squared norm of $|\CT_1\rangle$. Hence $\CZ$ is positive.

\subsection{Strategy}\label{strategy}

Now let us outline our strategy of constructing {PSEM}s from multi-invariants $\CZ$. Any multi-invariant is invariant under local unitary transformations and as explained below theorem \ref{concave}, any local unitary invariant function of $|\psi\rangle$ can be thought of as a function of $\rho_{\bar A}:={\rm Tr}_A|\psi\rangle\langle \psi|$. We will construct $\CZ$ that is convex in $\rho_{\bar A}$ for all parties $A$. Thanks to theorem \ref{concave} and remark \ref{smaller}, $1-\CZ$ is then a {PSEM}. We will take $\CZ$ to have at least one relfecting cut. As a result $\CZ>0$.

Let us  comment on how to think of $\CZ$ as a function of $\rho_{\bar A}$ graphically. The density matrix $\rho_{\bar{A}}$ is constructed by joining a black and a white vertex only by $A$-edge and keeping all the other edges open as shown in figure \ref{rho-notation}. 
\begin{figure}[h]
    \begin{center}
        \includegraphics[scale=0.25]{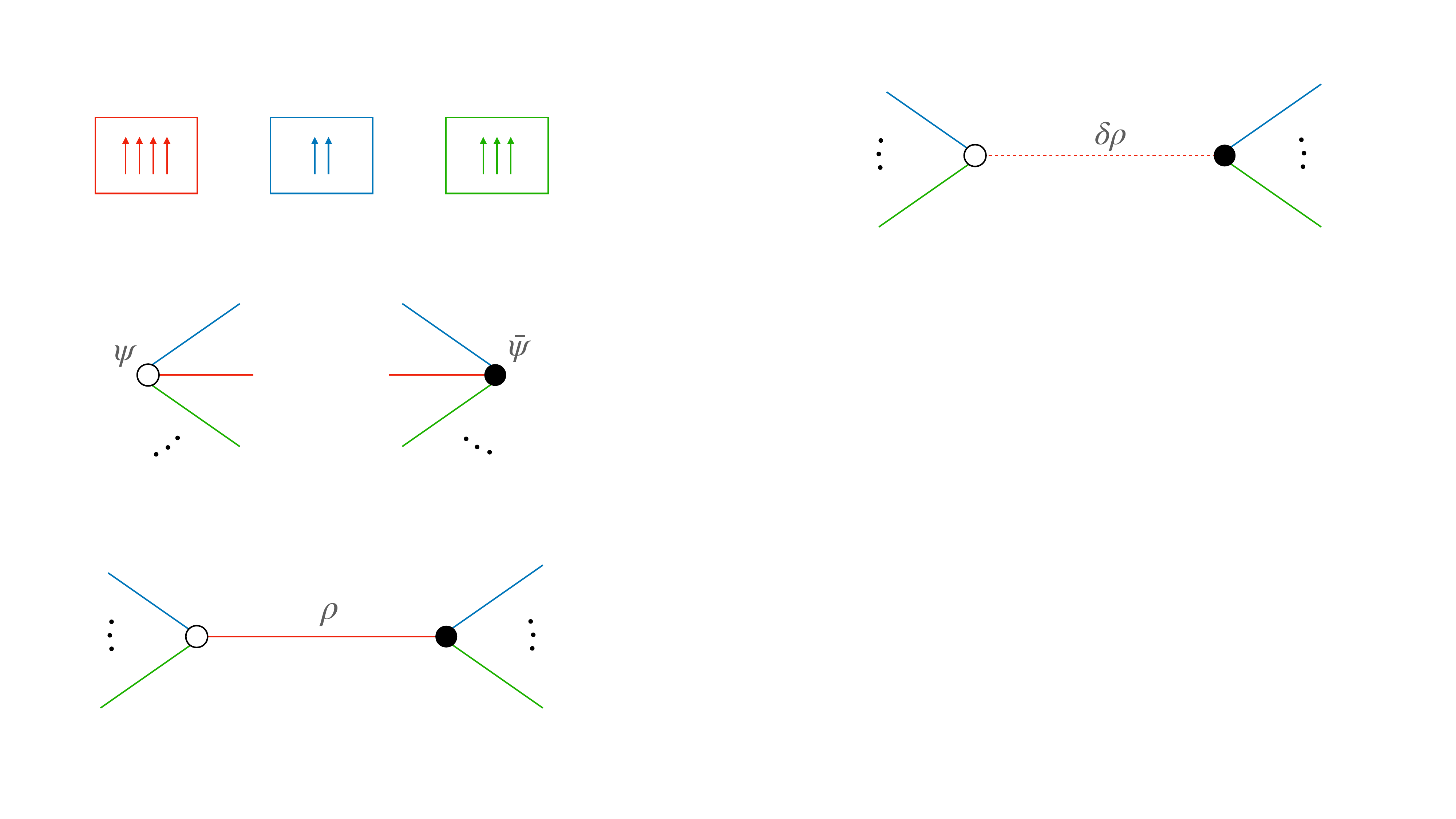}
        \includegraphics[scale=0.25]{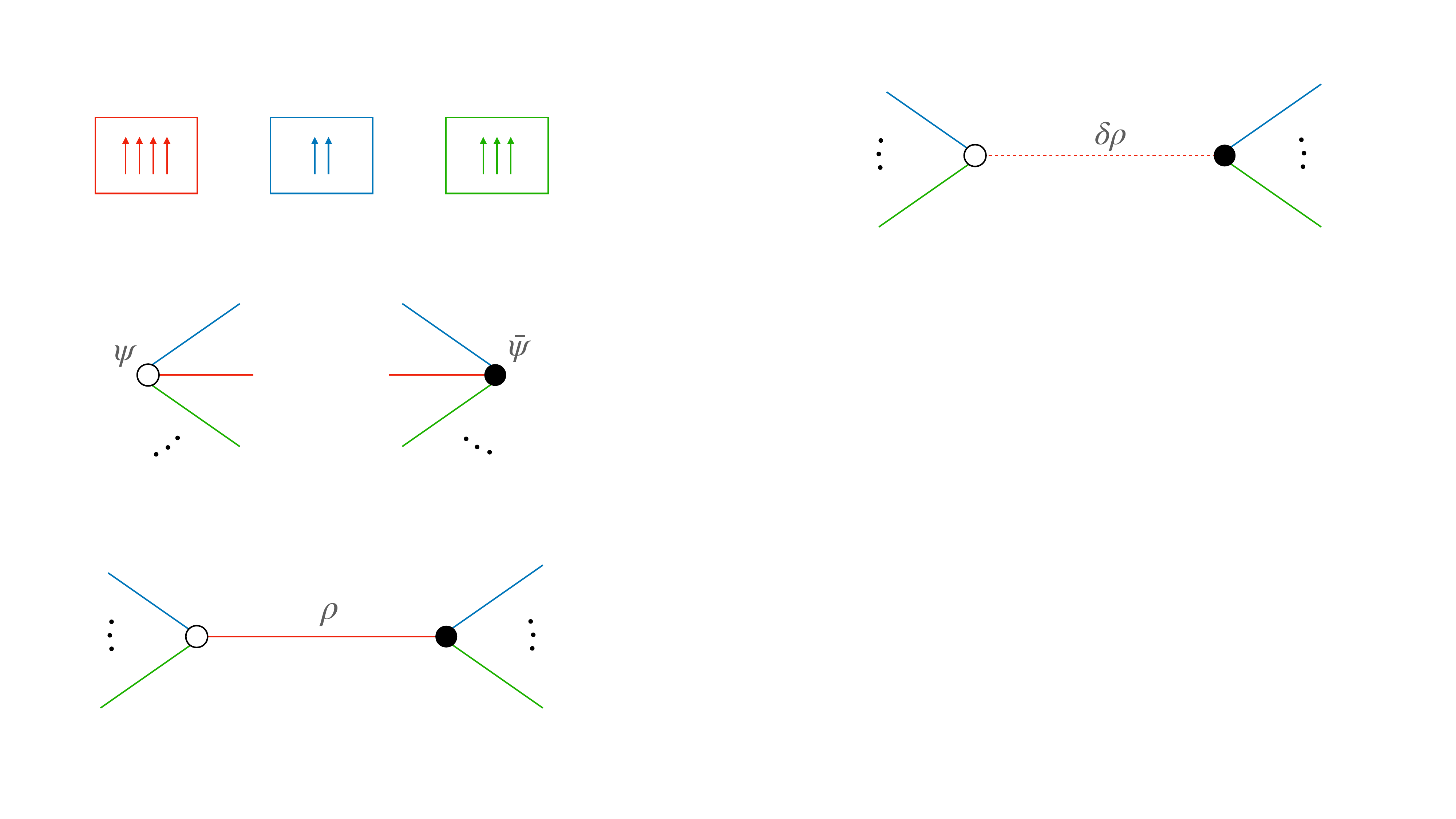}
    \end{center}
    \caption{Connecting only the $A$-edge (denoted by red color) of a $\psi$-$\bar \psi$ pair to get $\rho_{\bar{A}}$. The dotted line represents insertion of $\delta\rho_{\bar{A}}$ instead of $\rho_{\bar{A}}$. }\label{rho-notation}
\end{figure}
In this sense $\rho_{\bar{A}}$ ``lives on'' the $A$-edge. In order to analyze convexity of $\CZ$, we need compute derivative of $\CZ$ with respect to $\rho_{\bar A}$. We do so by varying $\rho_{\bar A}\to \rho_{\bar A}+\epsilon \delta \rho_{\bar A}$ for some  hermitian matrix $\delta \rho_{\bar A}$. The graphical expression for derivatives involves replacing  $\rho_{\bar{A}}$ on a certain edge $e$ of the graph by the variation  $\delta \rho_{\bar A}$.  
Graphically, we denote $\delta\rho_{\bar{A}}$ in the same way as $\rho_{\bar{A}}$ but with a dotted line. We will call the dotted line representing $\delta\rho_{\bar{A}}$ as a $\delta A$-edge. This is also shown in figure \ref{rho-notation}.

Now we are ready to compute the variation of $\CZ(\rho_{\bar A})$ under $\rho_{\bar{A}}\to \rho_{\bar{A}}+\epsilon \delta\rho_{\bar{A}}$ where $\delta\rho_{\bar{A}}$ is a hermitian matrix. Expanding $\CZ(\rho_{\bar{A}}+\epsilon \delta\rho_{\bar{A}})$ in $\epsilon$,
\begin{align}
    \CZ(\rho_{\bar{A}}+\epsilon \delta \rho_{\bar{A}}) = \CZ(\rho_{\bar{A}})+ \epsilon \delta\CZ+\frac{1}{2}\epsilon^2 \delta^2 \CZ+\CO(\epsilon^3).
\end{align} 
Here $\delta\CZ$ and $\delta^2\CZ$ is the first and second derivative of $\CZ$ with respect to $\rho_{\bar A}$. The convexity is of $\CZ$ with respect to $A$ is equivalent to,
\begin{align}\label{strat-ineq}
    \delta^2\CZ &\geq 0.
\end{align}
for all $\delta \rho_{\bar A}$.

We will ensure the convexity of $\CZ$ in $\rho_{\bar A}$ by expressing $\delta^2\CZ$ as sum of norms of vectors. The multi-invariant $\CZ$ for which this is possible is called \emph{$A$-edge-convex}. We will formalize the definition later in  definition \ref{edge-convex-def}. If $\CZ$ is $A$-edge-convex for all $A$, then it is called \emph{edge-convex}. 
The condition of $A$-edge-convexity is sufficient for convexity of $\CZ$ with respect to $\rho_{\bar A}$. It is not clear to us if it is necessary as well. It would be interesting to explore this point further.

For edge-convex $\psi$-graphs, we will prove a result result that is stronger than convexity of $\CZ$. We will show that if $\CZ$ is a connected graph and is $A$-edge-convex then the normalized multi-invariant, $\hat \CZ:=\CZ^{1/n_r}$ is also a convex function $\rho_{\bar A}$. Again, along with remark \ref{smaller}, this implies that for a connected $\CZ$ that is edge-convex, 
\begin{align}
    \hat \nu:=1-\hat \CZ
\end{align} 
is a {PSEM}. The {PSEM} $\hat \nu$  is stronger than $\nu$ because  $\nu=f(\hat \nu)= 1-(1-\hat \nu)^{n_r}$. The function $f$ is a concave monotonic function of $\hat \nu$ in the range $[0,1]$ for $n_r>1$. Because $0\leq {\CZ}\leq 1$, $\hat \nu$ is indeed in the range $[0,1]$. This makes $\nu$ a composite of $\hat \nu$ and hence the probability bounds following from it are not stronger than those following from $\hat \nu$, thanks to theorem \ref{composite}. 

In order to analyze the inequality \eqref{strat-ineq} and express $\delta^2\CZ$ as sum of norms,  it is useful to understand the derivatives graphically.
\begin{align}
    \delta\CZ=\sum_e \delta_e\CZ,\qquad {\rm and}\qquad \delta^2 \CZ=\sum_{e',e}^{e\neq e'} \delta_e\delta_{e'}\CZ
\end{align}
Here $\delta_e\CZ$ is an invariant obtained from $\CZ$ by replacing $\rho_{\bar{A}}$ at $A$-edge $e$ by $\delta\rho_{\bar{A}}$ i.e. by replacing the $A$-edge $e$ by $\delta A$-edge and $\delta_e\delta_{e'}\CZ$  is an invariant obtained by replacing $A$-edges $e$ and $e'$ by $\delta A$-edges. The quantity $\delta_e\delta_{e'}\CZ$ is symmetric in $e,e'$. 
For the example of $\psi$-graph in figure \ref{example}, we have shown the graphical representation of  $\delta\CZ$ and $\delta^2 \CZ$ below in figure \ref{delta-z}.
\begin{figure}[h]
    \begin{center}
        \includegraphics[scale=0.15]{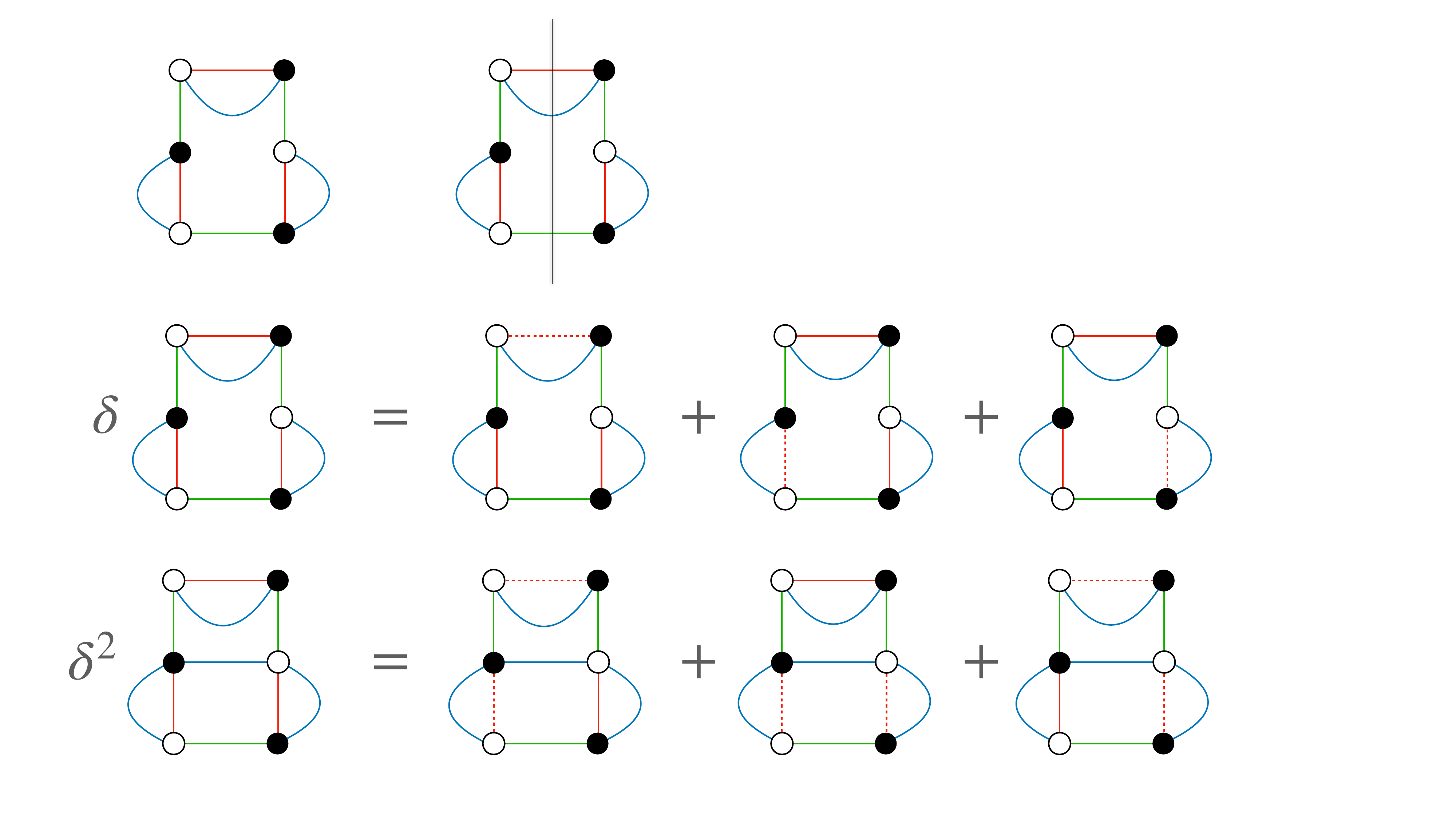}\\
        \includegraphics[scale=0.15]{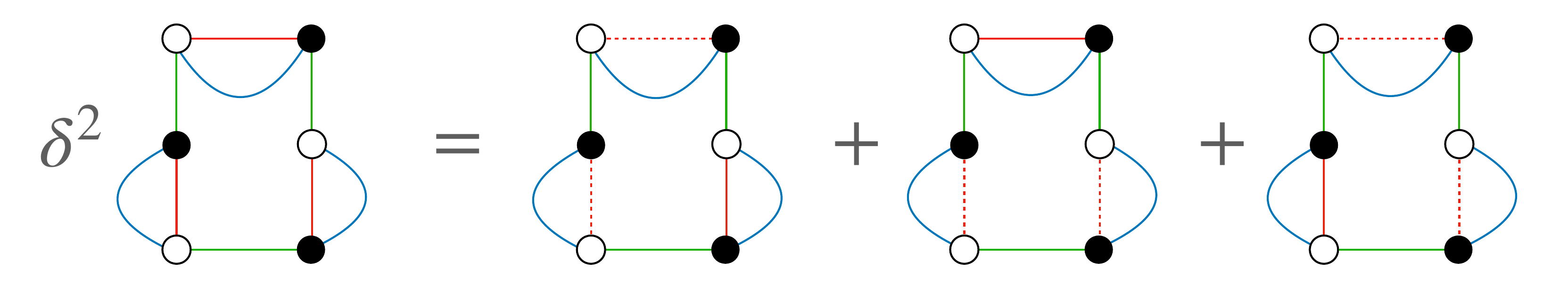}
    \end{center}
    \caption{The first line denotes $\delta \CZ$ for $\CZ$ given in figure \ref{example}. Each of the $A$-edges has been replaced by $\delta\rho_{\bar{A}}$ and the terms are added up. The second line denotes $\delta^2 \CZ/2$ for same $\CZ$. A pair of $A$-edges have been replaced by $\delta\rho_{\bar{A}}$ and the terms are added up.}\label{delta-z}
\end{figure}
The first inequality in \eqref{strat-ineq} is proved by expressing the left hand side $\delta^2\CZ$  as a sum of squared norms i.e. 
\begin{align}\label{strat1}
    \delta^2\CZ= \sum_i \langle \CT_i|\CT_i\rangle
\end{align}
for some states $|\CT_i\rangle$. To prove convexity of $\hat \CZ$ for an edge-convex $\CZ$, we will show
\begin{align}\label{strat2}
    &\CZ \delta^2\CZ - (1-\frac{1}{n_r})|\delta \CZ|^2\notag\\
    &= \sum_i \langle \CT| \otimes \langle \CT_i| ({\mathbb I}-{\mathbb P}) |\CT\rangle \otimes |\CT_i\rangle
\end{align}
Here $|\CT\rangle \otimes |\CT_i\rangle$ is a product state. The operator ${\mathbb I}$ is identity and the operator ${\mathbb P}$ is a permutation operator that swaps the two factors of the product state. As ${\mathbb P}$ is a unitary operator with eigenvalues $\pm 1$, the right hand side of \eqref{strat2} is non-negative. Interestingly, the set of states $|\CT_i\rangle$ in equation \eqref{strat2} ends up being  the same as the one that appears in equation \eqref{strat1}. That is why it is useful to analyze the first inequality in \eqref{strat-ineq}.

\subsection{Some simple examples}\label{simple-examples}
In this section we will demonstrate the strategy outlined in section \ref{strategy} in two relatively simple examples of a bi-partite and tri-partite {PSEM}. We will then consider a general multi-invariant and formulate convexity graphically using the strategy of section \ref{strategy}.

\subsubsection{Example 1: Bi-partite}
Consider a bi-partite state $\psi_{A\bar A}$. Tracing over party $A$ we obtain $\rho_{\bar A}$. Consider the multi-invariant $\CZ(|\psi\rangle)=\CE^{(2)}(|\psi\rangle):={\rm Tr}\rho_{\bar A}^2$. It is clearly symmetric in $A$ and $\bar A$. So in order to construct {PSEM} we only need to show convexity of $\CZ$ and $\hat \CZ$ in  $\rho_{\bar A}$. 

In case of $\CZ=\CE^{(2)}$, $\CZ$ consists of only two $A$-edges, say $e_1$ and $e_2$. As a result $\delta^2 \CZ$ consists of a two identical terms $\delta_{e_1}\delta_{e_2}\CZ$ and $\delta_{e_2}\delta_{e_1}\CZ$. Each of these terms is ${\rm Tr}\delta\rho_{\bar A}^2$.  Its positivity is seen as follows, 
\begin{align}
    \delta^2 \CZ=2{\rm Tr}\delta\rho_{\bar A}^2=\langle\sqrt{2}\delta \rho_{\bar A}|\sqrt{2}\delta \rho_{\bar A}\rangle.
\end{align}
In this case, the set of states $|\CT_i\rangle$ consists of a single element, namely $|\sqrt{2}\delta\rho_{\bar A}\rangle$. The argument is presented graphically in figure \ref{arg1}.
\begin{figure}[h]
    \begin{center}
        \includegraphics[scale=0.25]{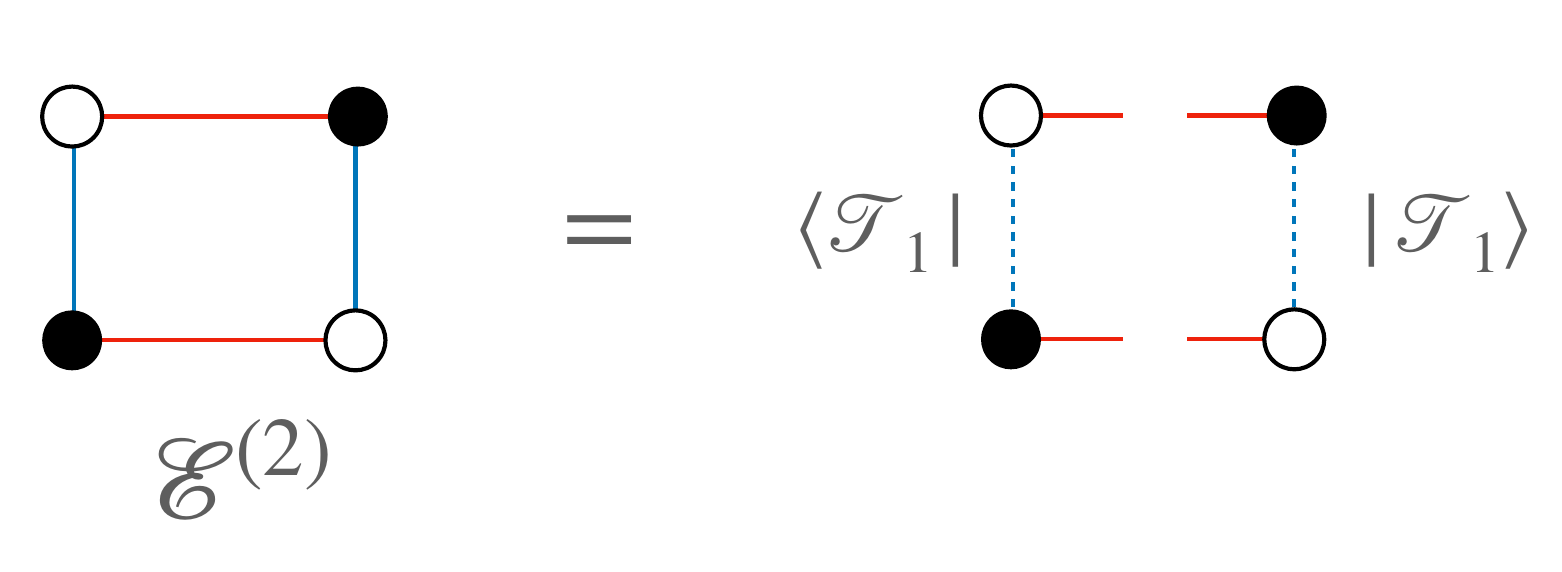}
    \end{center}
    \caption{Expressing $\delta^2{\CE}^{(2)}$ as norm of state $|\CT_1\rangle$.}\label{arg1}
\end{figure}

To prove convexity of $\hat \CZ= \CZ^{1/2}$, we simply notice the positivity of the expectation value
\begin{align}
    \langle \rho_{\bar A}\otimes \sqrt{2}\delta\rho_{\bar A}|({\mathbb I}-{\mathbb P})|\rho_{\bar A}\otimes \sqrt{2}\delta\rho_{\bar A}\rangle 
\end{align}
as outlined in section \ref{strategy}. As before, ${\mathbb P}$ is a permutation operator that swaps the two factors of the product state.
Because $\delta_e \CZ=\delta \CZ/2$, the above quantity is the same as $\CZ\delta^2 \CZ-\frac12 |\delta \CZ|^2$. This shows that $\hat \nu :=1-\hat \CZ$ is a {PSEM}. This argument is presented graphically in figure \ref{arg2}.
\begin{figure}[h]
    \begin{center}
        \includegraphics[scale=0.25]{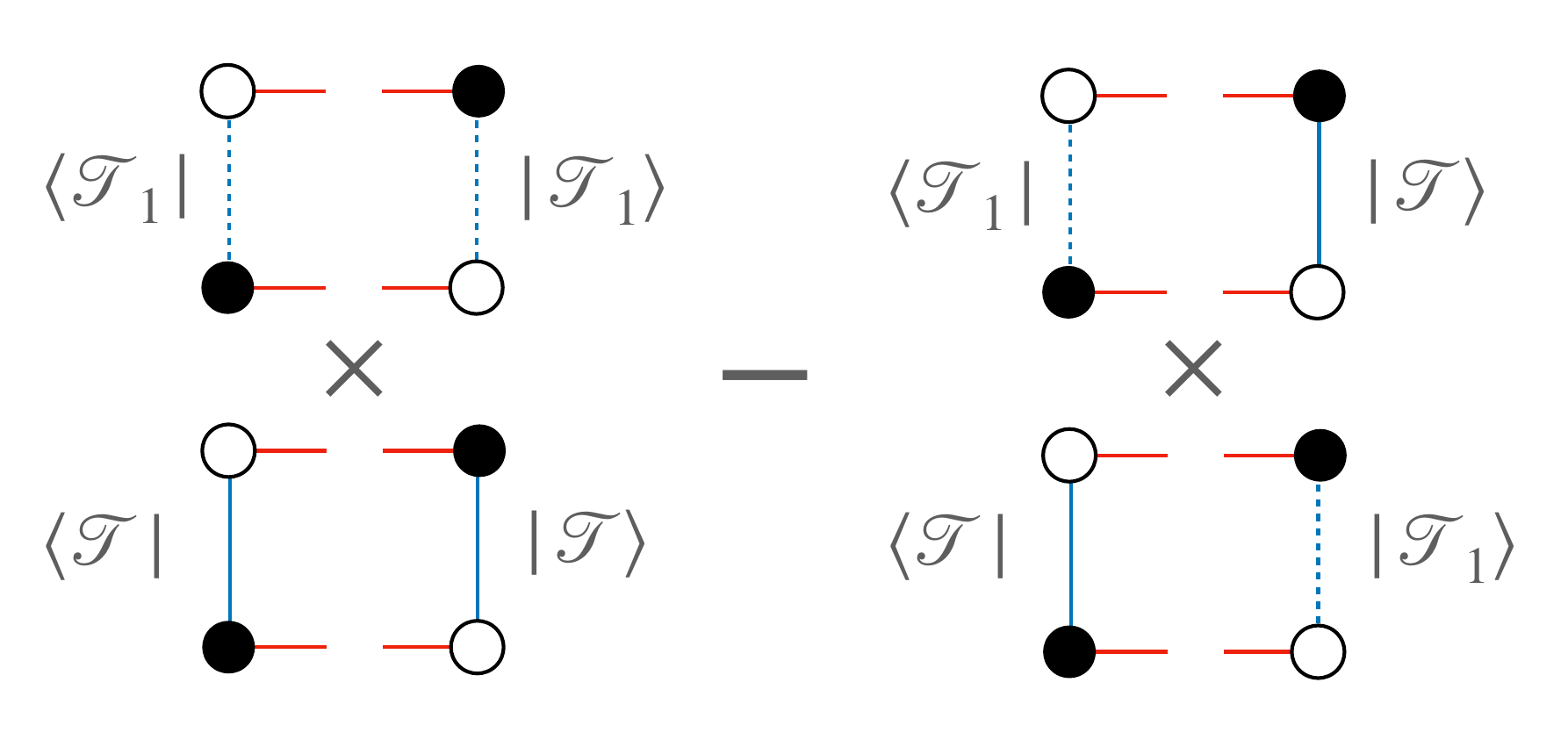}
    \end{center}
    \caption{Expressing $\CZ\delta^2 \CZ-\frac12 |\delta \CZ|^2$ as expectation value of a positive semi-definite operator ${\mathbb I}-{\mathbb P}$ in state $|\CT\rangle \otimes |\CT_1\rangle$.}\label{arg2}
\end{figure} 
The conclusion of this analysis is that $1-\sqrt{{\rm Tr}\rho_{\bar A}^2}$ is a {PSEM}.

\subsubsection{Example 2: Tri-partite}
We will show that the tripartite invariant $\CE^{(3)}$ defined in figure \ref{e3pic} and its normalized version $\hat \CE^{(3)}$ is convex in $\rho_{\bar A}$ for all $A$. 
\begin{figure}[h]
    \begin{center}
        \includegraphics[scale=0.25]{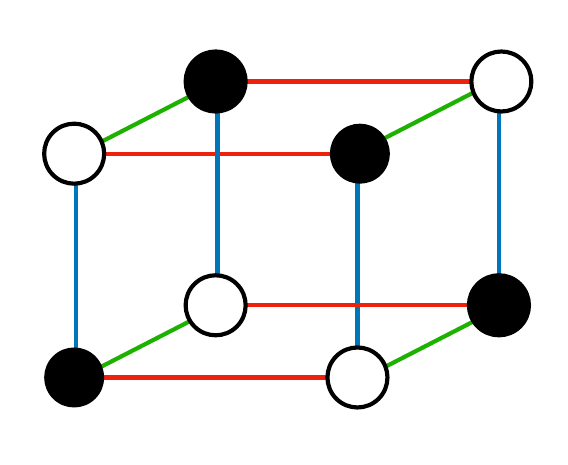}
    \end{center}
    \caption{The above $\psi$-graph defines a tri-partite multi-invariant ${\cal E}^{(3)}$.}\label{e3pic}
\end{figure} 
As the invariant is symmetric in all three parties, we only need to show convexity with respect of $\rho_{\bar A}$ for one of the parties $A$. In order to do so we simply need to produce a set of states $|\CT_i\rangle$ which obeys the equations \eqref{strat1} and \eqref{strat2}. Unlike the bi-partite case, this set consists three states. We have expressed 
\begin{align}
    \delta^2{\CE^{(3)}} =\langle {\cal T}_1|{\cal T}_1\rangle+\langle {\cal T}_2|{\cal T}_2\rangle+ \langle {\cal T}_3|{\cal T}_3\rangle
\end{align}
graphically in figure \ref{e3norm}. The figure also verifies the equations \eqref{strat1} and \eqref{strat2}.
\begin{figure}[h]
    \begin{center}
        \includegraphics[scale=0.15]{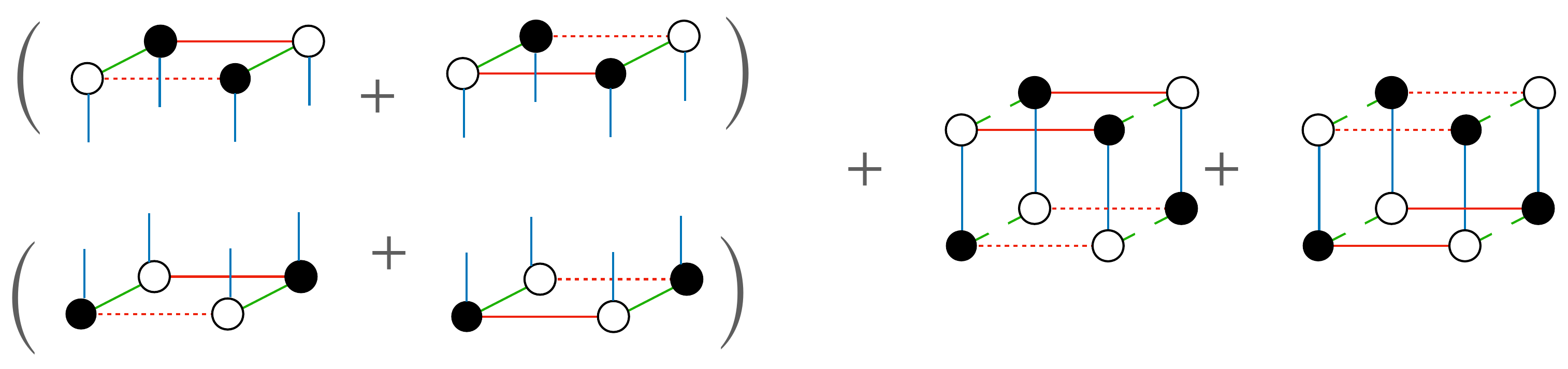}
    \end{center}
    \caption{Expressing $\delta^2{\CE^{(3)}}$ as the sum of norms of two states $|\CT_1\rangle, |\CT_2\rangle $ and $|\CT_3\rangle$. The first state $|\CT_1\rangle$ is sum of two terms.}\label{e3norm}
\end{figure}

\subsection{Edge-convexity}\label{constraints}
As exemplified by the bi-partite and tri-partite invariants, reflecting cuts play a crucial role in establishing the inequalities \eqref{strat-ineq}. We will consider a general multi-invariant $\CZ$ that admits multiple reflecting cuts. For a given reflecting cut $k$ of $\CZ$, let us denote the open graph on the left side of the cut as $L_k$ and the one on the right side of the cut as $R_k$.  
The open graph $R_k$ defines the state $|\CT^{(k)}\rangle$. The open graph $L_k$  defines the conjugate state $\langle\CT^{(k)}|$.
Let us also define the state obtained from $|\CT^{(k)}\rangle$ by replacing $\rho_{\bar{A}}$ at the $A$-edge $e$ by $\delta\rho$ as $|\CT_{e}^{(k)}\rangle$. Observe that 
\begin{align}
    &\langle \CT^{(k)}|\CT_{e}^{(k)}\rangle=\delta_{e} \CZ,\notag\\
    &\langle \CT_{e}^{(k)}|\CT^{(k)}\rangle =\delta_{k(e)} \CZ,\notag\\
    &\langle \CT_{e'}^{(k)}|\CT^{(k)}_{e}\rangle=\delta_{k(e')}\delta_{e} \CZ.
\end{align}
Here $k(e)$ denotes the image of $e$ under reflection defined by the cut $k$. Note that $k(k(e))=e$. The inner product is taken by gluing the open edges that are images of each other under reflection.

We now consider multiple states of the type $\sum_{e \in R_k} a_e |\CT^{(k)}_e\rangle$ for various choices of coefficients and for all reflecting cuts $k$ and add their norms to get the positive quantity,
\begin{align}
    &\sum_k \Big(|\sum_{e\in R_k} a_e^{(k)}|\CT^{(k)}_e \rangle |^2 +|\sum_{e\in R_k}  b_e^{(k)}|\CT^{(k)}_e\rangle |^2+\ldots\Big) >0\notag\\
    &\sum_k \sum_{ e,e'\in R_k} \Big( a_{e'}^{(k)*}a_e^{(k)}+b_{e'}^{(k)*}b_e^{(k)} +\ldots \Big) \, \delta_{k(e')}\delta_e\CZ  >0\notag\\
    &\sum_k \sum_{ e,e'\in R_k} \CP_{e,e'} ^{(k)} \, \delta_{k(e')}\delta_e\CZ >0.
\end{align} 
Here $\CP_{e,e'} ^{(k)}:= a_{e'}^{(k)*}a_e^{(k)}+b_{e'}^{(k)*}b_e^{(k)} +\ldots$ is a positive definite matrix. It is indexed by $A$-edges on the right side of the reflecting cut $k$. For $e'\in R_k$, defining $k(e')=e''\in L_k$, we have
\begin{align}
    \sum_k \sum_{e\in R_k, e'\in L_k} \CM_{e,e'} ^{(k)} \, \delta_e\delta_{e'}\CZ &>0.
\end{align}  
Here $\CM^{(k)}_{e,e'} :=\CP^{(k)}_{e,k(e')}$ where $e$ and $e'$ are on the opposite sides of the reflecting cut. To prove convexity of $\CZ$, we need to be able to choose positive definite matrices $\CP^{(k)}$ such that the left-hand side is $\sum_{e\neq e'} \delta_e\delta_{e'}\CZ=\delta^2\CZ$. Each ordered term $(e,e')$ with $e\neq e'$ must appear precisely once.
\begin{align}\label{sol-convex}
    \sum_{k\, {\rm s.t.}\,e\in R_k, e'\in L_k} \,\,\CM_{e,e'}^{(k)}=1. \qquad \forall\,  e,e'.
\end{align}
where the sum $\sum_{k}$ is carried out over reflecting cuts $k$ which separate $e$ and $e'$. 
\begin{definition}\label{edge-convex-def}
    If a $\psi$-graph admits a solution to equation \eqref{sol-convex} then it is called $A$-edge-convex. If it is $A$-edge-convex for all parties then it is called edge-convex. 
\end{definition}
\noindent
We have shown,
\begin{proposition}\label{main1}
    If a $\psi$-graph $\CZ$ is edge-convex then $\nu:=1-\CZ$ is a {PSEM}.
\end{proposition}

Now we extract a simple and necessary condition for a $\psi$-graph to be edge-convex.
\begin{definition}
    A $\psi$-graph is called $A$-edge-reflecting if it admits a reflecting cut that separates any pair of $A$-edges $(e,e')$. If it is $A$-edge-reflecting for all $A$ then it is called edge-reflecting. 
\end{definition}
\begin{remark}\label{obvious1}
    If a $\psi$-graph is A-edge-convex then it is $A$-edge-reflecting.
\end{remark}
\noindent
This is because, for a given pair of vertices $e,e'$, there needs to be at least one reflecting cut separating them so that it has a chance of appearing on the left-hand side sum in equation \eqref{sol-convex} which is the defining  condition for edge-convexity. 

\begin{remark}\label{transitive}
    If a connected $\psi$-graph is  A-edge-reflecting then it is $A$-edge-transitive.
\end{remark}
\noindent
Here by $A$-edge transitivity we mean that the automorphism group of the $\psi$-graph acts transitively on $A$-edges.  Let $e_1$ and $e_2$ be a pair of $A$-edges. Find a path between $e_1$ and $e_2$ that has alternating $A$-edges. Reflecting cuts containing a non-A-edge in this path maps consecutive $A$-edges to each other. Sequence of such reflecting cuts yields an automorphism that maps $e_1$ to $e_2$.

\subsubsection{A look back at the examples}
It is instructive to express the proof of convexity of $\CE^{(2)}$ and $\CE^{(3)}$ in terms of the positive definite matrices $\CP^{(k)}$ associated to each reflecting cut $k$. 

The $\psi$-graph ${\cal E}^{(2)}$ is a square as shown in figure \ref{cut2}. We have also shown the reflecting cut $k$ separating $A$-edges (denoted by red color). The associated positive matrix is $\CP^{(k)}=(1)$.
\begin{figure}[h]
    \begin{center}
        \includegraphics[scale=0.3]{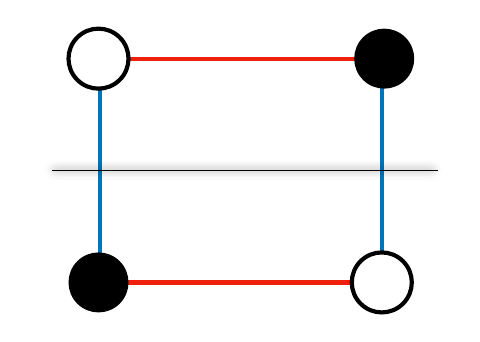}
    \end{center}
    \caption{Graph of ${\cal E}^{(2)}$. The black line denotes the reflecting cut $k$}\label{cut2}
\end{figure} 

The $\psi$-graph ${\cal E}^{(3)}$ is a cube as shown in figure \ref{cut3}, along with the two reflecting cuts $k_1$ and $k_2$ separating $A$-edges.
\begin{figure}[h]
    \begin{center}
        \includegraphics[scale=0.25]{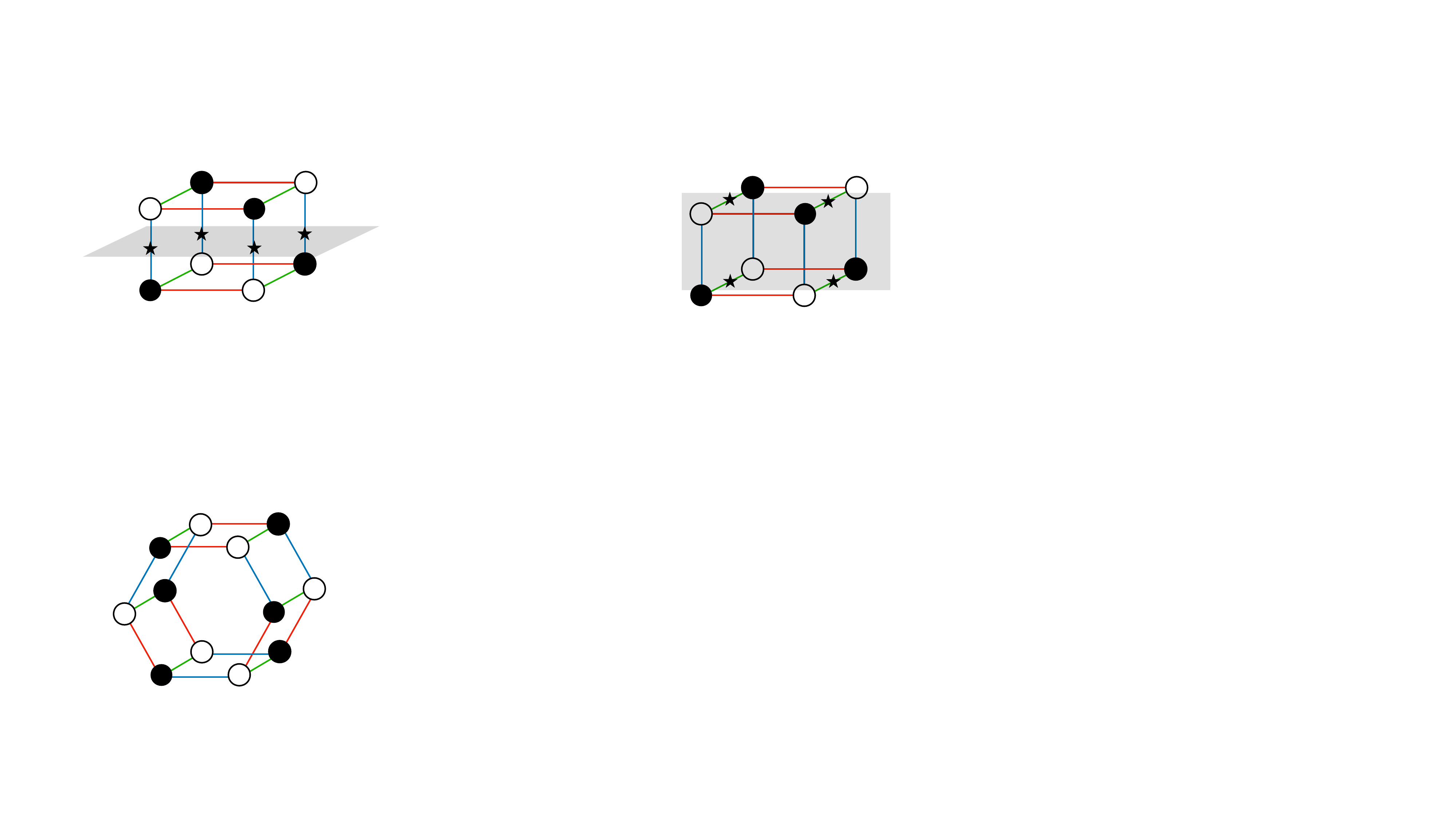}
        \includegraphics[scale=0.25]{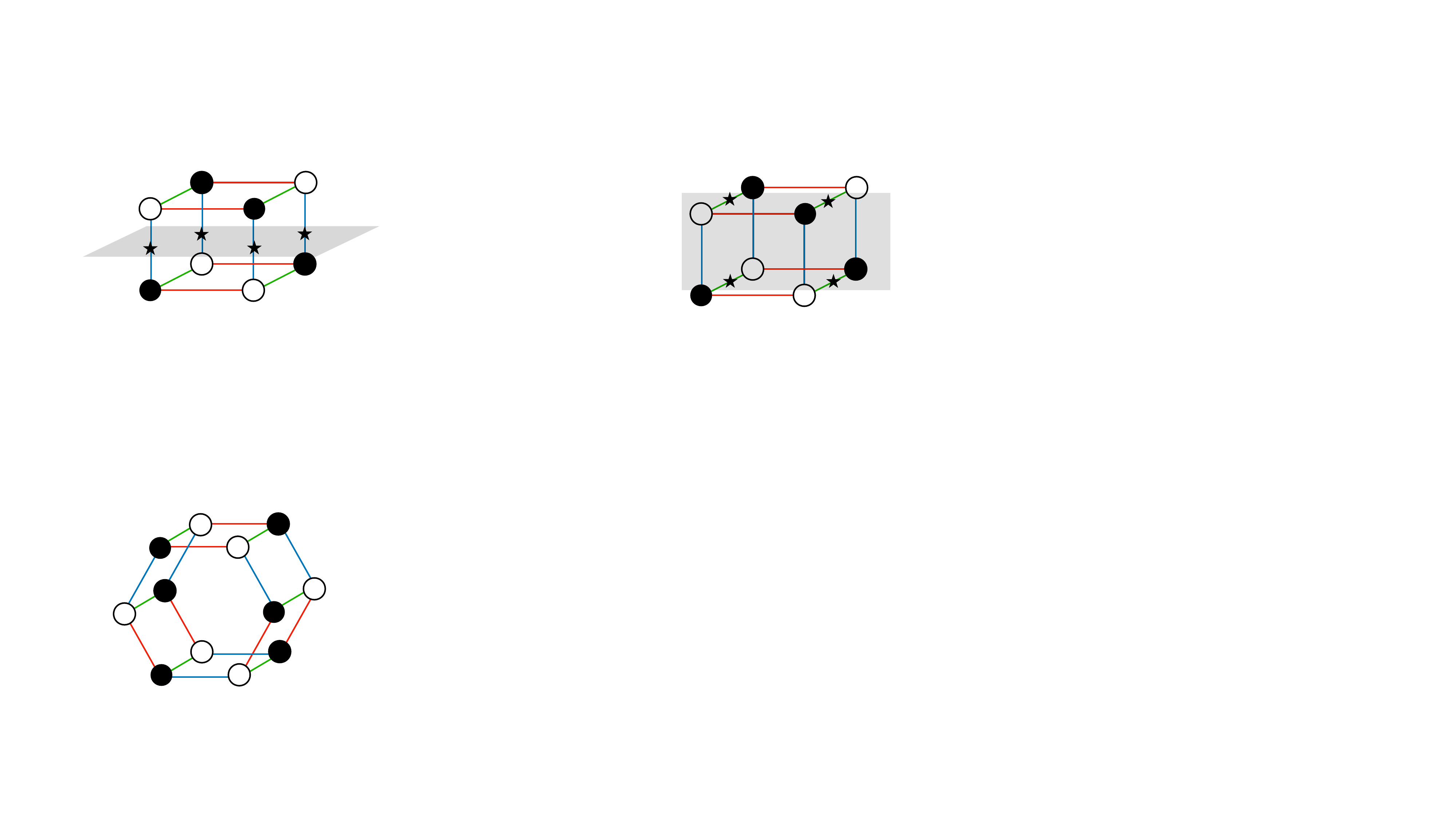}
    \end{center}
    \caption{Graph of ${\cal E}^{(2)}$ and the two reflecting cuts are suggestively denoted by the gray sheets. The first one is $k_1$ and the second one is $k_2$.}\label{cut3}
\end{figure}
\begin{align}
    \CP^{(k_1)}= \begin{pmatrix}
        1 & 1 \\
        1 & 1 
        \end{pmatrix}
        \qquad 
        \CP^{(k_2)}= \begin{pmatrix}
            1 & 0 \\
            0 & 1 
            \end{pmatrix}.
\end{align}
This is not a unique solution for $\CP^{(k_1)}$ and $\CP^{(k_2)}$. We also have a symmetric solution.
\begin{align}
    \CP^{(k_{1})}= \CP^{(k_{2})}=
    \begin{pmatrix}
        1 & \frac12 \\
        \frac12 & 1 
        \end{pmatrix}.
\end{align}

\subsubsection{A new example}\label{new-ex}
Consider the edge-reflecting graph $\CC_n$, shown in figure \ref{C3}. We will  show that it is edge-convex. It is a cyclic graph with $n$ $\psi$s and $n$ $\bar\psi$s with edges alternating between types $A$ and $B$. The edges $A$ and $\bar A$ are shown in the figure with the colors red and blue respectively. Thought of as a local unitary invariant for two parties $A$ and $\bar A$, $\CC_n$ is nothing but ${\rm Tr} \rho_A^n$ or equivalently ${\rm Tr} \rho_{\bar A}^n$.
\begin{figure}[h]
    \begin{center}
        \includegraphics[scale=0.3]{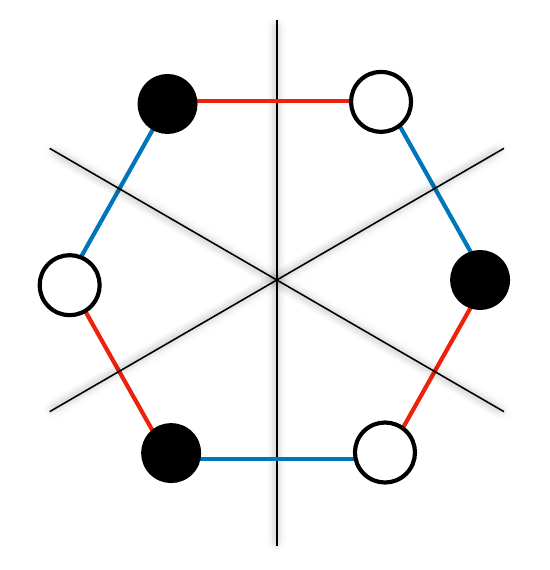}
    \end{center}
    \caption{Graphical presentation of $\CC_3$. It is edge-reflecting. We have denoted the three reflecting cuts with black lines. The graph for $\CC_n$ is cyclic with $n$ $\psi$s and $n$ $\bar\psi$s with edges alternating between types $A$ and $B$. It has $n$ reflecting cuts.}\label{C3}
\end{figure}
We will now show that $\CC_n$ is edge-convex.

We have $\CC_2=\CE^{(\2)}$. We have already discussed the edge-convexity for this. The graph $\CC_n$ has $n$ reflecting cuts, as shown in figure \ref{C3}. For every reflecting cut, we take the positive matrix $\CP^{(k)}$ to be the identity matrix.  Given a pair of $A$-type edges, there is precisely one cut such that these edges are images of each other under it. This ensures that $\CP^{(k)}= {\mathbb I}$ for all $k$ indeed solve the equation \eqref{sol-convex}. 
This proves that $\CC_n$ are edge-convex and shows that $1-{\CC}_n$ is a {PSEM}. In the more familiar notation, this {PSEM} is $1-({\rm Tr}\rho_{\bar A}^n)$.

\subsubsection{Strengthening convexity}\label{strengthen}
In this section we will prove theorem \ref{theorem-psem} that is strictly stronger than proposition \ref{main1}.
\theorempsem*
\begin{proof}
Because an $A$-edge-convex graph is $A$-edge-reflecting,  thanks to lemma \ref{transitive},  it is also $A$-edge-transitive. As a result, an edge-convex graph is  all-edge-transitive.
Due to $A$-edge-transitivity, we have $\delta_e \CZ=\delta_{e'} \CZ$ for any two $A$-edges $e$ and $e'$. Consider the state $|\tilde \CT^{(k)}\rangle :=(\sum_e a^{(k)}_e|\CT_e^{(k)}\rangle) \otimes |\CT^{(k)}\rangle$ and use the inequality
\begin{align}
    \sum_k \langle \tilde \CT^{(k)}|({\mathbb I}-{\mathbb P})|\tilde \CT^{(k)}\rangle \geq 0.
\end{align}
where $P$ is a unitary operator that acts as a swap on the factors involved in the direct product state $|\psi\rangle$. This gives,
\begin{align}
    &\CZ\times \Big(\sum_k \sum_{e\in R_k, e'\in L_k} \CM^{(k)}_{e,e'}\, \delta_e\,\delta_{e'}\CZ\Big)\notag\\
    &\geq \sum_k \sum_{e\in R_k, e'\in L_k} \CM^{(k)}_{e,e'} \, \delta_e\CZ \,\delta_{e'}\CZ = \sum_{e,e'}^{e\neq e'} \delta_e\CZ \,\delta_{e'}\CZ,\notag\\
    &\CZ \delta^2 \CZ \geq |\delta \CZ|^2-\sum_e |\delta_e\CZ|^2 = (1-\frac{1}{n})|\delta \CZ|^2.
\end{align}
For the equality in the second line, we have used the property of the solution \eqref{sol-convex} and for the equality in the last line, we have used the $A$-edge transitivity. This shows that if a graph $\CZ$ is $A$-edge transitive and admits solution to \eqref{sol-convex} for all $A$ then $\hat \nu = 1-\hat \CZ$ is a {PSEM}. 
\end{proof}
\noindent
As remarked earlier $\hat \nu$ is a stronger monotone than $\nu$ because $\nu$ is a composite of $\hat \nu$. Thanks to theorem \ref{cartesian}, we now have stronger {PSEM}s $1-{\hat \CE}^{(2)}, 1-{\hat \CE}^{(3)}$ and $1-{\hat \CC}_{n}$.

The extension of theorem \ref{theorem-psem}  to disconnected edge-convex graphs is as follows. If the $\psi$-graph $\CZ$ is disconnected with two edge-convex connected components $\CZ_1$ and $\CZ_2$ then $\nu = 1-\hat{\CZ_1} \cdot \hat{\CZ_2}$ is a {PSEM}. This follows from a straightforward generalization of the proof to disconnected graphs. 

Note that even though $\CZ=\CZ_1\cdot \CZ_2$, the normalized versions $\hat \CZ \neq \hat{\CZ_1} \cdot \hat{\CZ_2}$. Letting $\nu_i=1-\hat {\CZ_i}$, we see $\nu = \nu_1 +\nu_2-\nu_1 \nu_2$. In particular, $\nu$ is a concave, monotonic function of $\nu_1$ and $\nu_2$ and hence is a composite. As a result, due to theorem \ref{composite}, the bound following from $\nu$ is  weaker than the one following from the set $\{\nu_1, \nu_2\}$. This is the reason we take $\CZ$ to be a connected graph.

\subsubsection{Aside: Conversion of multiple copies}
Let us consider the problem of converting $k$ copies of $|\psi\rangle$ to $k$ copies of $|\phi\rangle$. The bound on the transition probability is provided by
\begin{align}
    \frac{\nu(|\psi\rangle^{\otimes k})}{\nu(|\psi\rangle^{\otimes k})} :=p_k
\end{align}
If $p_1<1$, any meaningful bound on this transition probability must have the property that $p_k\geq p_1^k$  because the conversion of $k$ copies of $|\psi\rangle$ to $k$ copies of $|\phi\rangle$ can certainly done copy
by copy in $k$ steps. In the same vein, if $k$ is a multiple of $l$, then it must be that $p_k\geq p_l^{k/l}$. 
This is indeed the property of $\nu$ and $\hat \nu$ defined in proposition  \ref{main1} and theorem \ref{theorem-psem} respectively. In fact, it obeys stronger inequality, $p_k\geq p_l$ for $k\geq l$. The proof is straightforward and uses remark \ref{factor}.

\section{Edge-convexity of the box product}
In this section we will prove  theorem \ref{cartesian}. Let us first define the box product of two $\psi$-graphs. This definition is very similar to the standard cartesian product of graphs. 

Let us first recall the definition of the cartesian product of ordinary graphs $G_i$'s. If the vertex and edge set of $G_i$ is denoted as $V_i$ and $E_i$ then the vertex set of their cartesian product $G_1\Box G_2$ is the direct product $V_1\otimes V_2$. Two vertices $(v_1,v_2)$ and $(u_1,u_2)$ are joined if either $u_1=v_1$ and $u_2$ and $v_2$ are adjacent to each other in $G_2$ or $u_2=v_2$ and $u_1$ and $v_1$ are adjacent to each other in $G_1$. In other words, the adjacency matrix $A$ of $G_1\Box G_2$ is
\begin{align}\label{cart-adj}
    A= A_1\otimes {\mathbb I} +{\mathbb I}\otimes A_2
\end{align}
where $A_i$ is the adjacency matrix of $G_i$. The notion of cartesian product is extended straightforwardly for $\psi$-graphs. Because we need $\CZ_1{\hat \Box} \CZ_2$ to be a $\psi$-graph, it is necessary to assume that the edges of $\CZ_1$ and $\CZ_2$ do not have a common label. Then the vertex parity of $(v_1,v_2)$ is the sum of parities of $v_1$ and $v_2$. As for the edge-labels, we simply used the formula \eqref{cart-adj} separately for every label $A$ in defining the adjacency matrix of $\CZ_1{\hat \Box} \CZ_2$. See figure \ref{carte} for an example.
\begin{figure}[h]
    \begin{center}
        \includegraphics[scale=0.15]{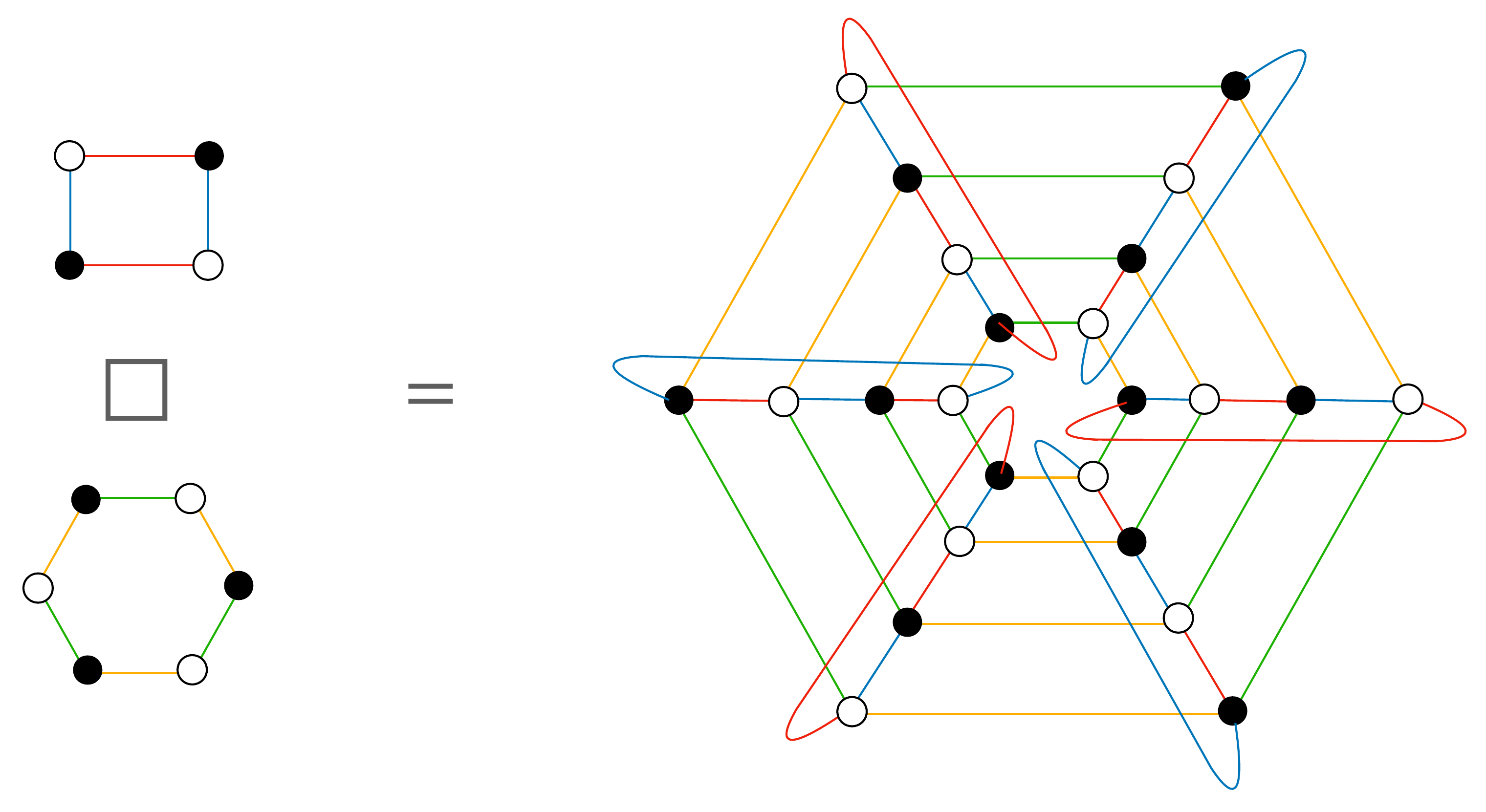}
    \end{center}
    \caption{Example of the box product of $\psi$-graphs.}\label{carte}
\end{figure}

Before we prove theorem \ref{cartesian}, we need to define the notion of vertex-convexity.
\begin{definition}
    A $\psi$-graph is called vertex-convex if it admits solution to condition:
    \begin{align}\label{vertex-sol}
        \sum_{k\, {\rm s.t.}\,v\in R_k, v'\in L_k} \,\,\CM_{v,v'}^{(k)}=1. \qquad \forall\,  v,v'.
    \end{align}
    where the sum is over all reflecting cuts that separate vertices $v$ and $v'$. The matrix $\CP^{(k)}(v,v')= \CM^{(k)}_{v, k(v')}$ is positive semi-definite for all $k$'s in the sum.
\end{definition}
\noindent
This parallels the definition of edge-convexity but for vertices.

We have the following lemma that relates the edge and vertex notions of convexity.
\begin{restatable}{lem}{vertexconvex}\label{vertex-convex}
    If a $\psi$-graph is edge-convex then it is vertex convex.
\end{restatable}
\noindent
The proof is given in appendix \ref{proof-theorem1}.

\cartesian*
\begin{proof}
    First let us show that if $\CZ_1$ and $\CZ_2$  are edge-reflecting then $\CZ_1{\hat \Box} \CZ_2$ is also edge-reflecting. Without loss of generality let us assume that $\CZ_1$ is $A$-edge-reflecting and that $\CZ_2$ does not have any $A$-edge. The $A$-edges in the cartesian product will be labeled by the pair $(e,u)$ and $(f,v)$ where $e$ and $f$ are $A$-edges of $\CZ_1$ and $u, v$ are vertices of $\CZ_2$. If $e$ and $f$ are distinct edges of $\CZ_1$, we can simply use the reflecting cut that separates them in $\CZ_1$ and take its cartesian product with $\CZ_2$ to produce the required reflecting cut in $\CZ_1{\hat \Box} \CZ_2$. If $e=f$ and $u\neq v$, we find a reflecting cut in $\CZ_2$ separating $u$ and $v$ using its vertex-reflecting property. Taking its cartesian product with $\CZ_1$ produces the required reflecting cut. 

    Now we construct the solution to \eqref{sol-convex} for $\CZ_1{\hat \Box} \CZ_2$ using a similar strategy. 
    using the solution to the same equation for $\CZ_1$ and $\CZ_2$ separately. We would like to show that the pair of $A$-edges, $(e,u)$ and $(f,v)$ appears with weight $1$. Denoting the reflecting cuts separating $e$ and $f$ in $\CZ_1$ as $k$, 
    \begin{align}
        \CM^{(\tilde k)}_{(e,u),(f,v)}= \CM^{(k)}_{e,f}\qquad \forall u,v.
    \end{align}
    Here  $\tilde k = k{\hat \Box}\, \CZ_2$ is a reflecting cut of 
    $\CZ_1{\hat \Box} \CZ_2$. 
    As $\CP_{e,f}$ is positive definite, $\CP_{(e,u),(f,v)}$ is also positive definite. This will create all the pairs of $A$-edges with weight $1$ except for the ones where $e=f$ and $u\neq v$. To produce these remaining terms with weight $1$, we use the vertex-convexity of graph $\CZ_2$. Let us denote the reflecting cuts separating $u$ and $v$ of $\CZ_2$ as $k_V$. 
    \begin{align}
        \CM^{(\tilde k)}_{(e,u),(f,v)}= \CM^{(k_V)}_{u,v}\, \delta_{e,f}.
    \end{align}
    Here  $\tilde k = \CZ_1\Box k_V$ is a reflecting cut of 
    $\CZ_1{\hat \Box} \CZ_2$. 
    For an edge-convex graph, the matrix $\CM^{(k_V)}_{u,v}$ is constructed in lemma \ref{vertex-convex}.
\end{proof}

The examples of edge-convex graphs that we previously constructed with brute  force viz. ${\cal E}^{(2)}$ and ${\cal E}^{(3)}$ can be understood as simple consequence of theorem \ref{cartesian}. Note that ${\cal E}^{(\tq)}={\cal E}_{1}^{{\hat \Box } q}=:{\cal E}_{\tq;\emptyset}$. This shows that not only for $\tq=2,3$ but for all $\tq\in {\mathbb Z}_+$, ${\cal E}^{(\tq)}$ is edge-convex. Combined with the result that ${\cal C}_n$ is edge convex, we get the most general edge-convex graph
\begin{align}
    {\cal E}_{1}^{{\hat \Box } m} {\hat \Box} \CC_{n_1}{\hat \Box}\ldots {\hat \Box} \CC_{n_k}=:{\cal E}_{m;n_1, \ldots, n_k}. 
\end{align}
In the next section we will explore the transformation probability bounds following from ${\cal E}_{1;n}$.

\section{Example of a bound}
In this section we will compute the bounds on  transition success probability for tri-partite states $|\psi\rangle \to |\phi\rangle$. We will use the {PSEM} $\hat \nu ({\tilde C}_n)$ where  ${\tilde C}_n:= {\cal E}_{1;n}$. The  $\psi$-graph of ${\tilde C}_n$ is shown in figure \ref{sol-example}. 
\begin{figure}[t]
    \centering
        \includegraphics[width=4cm]{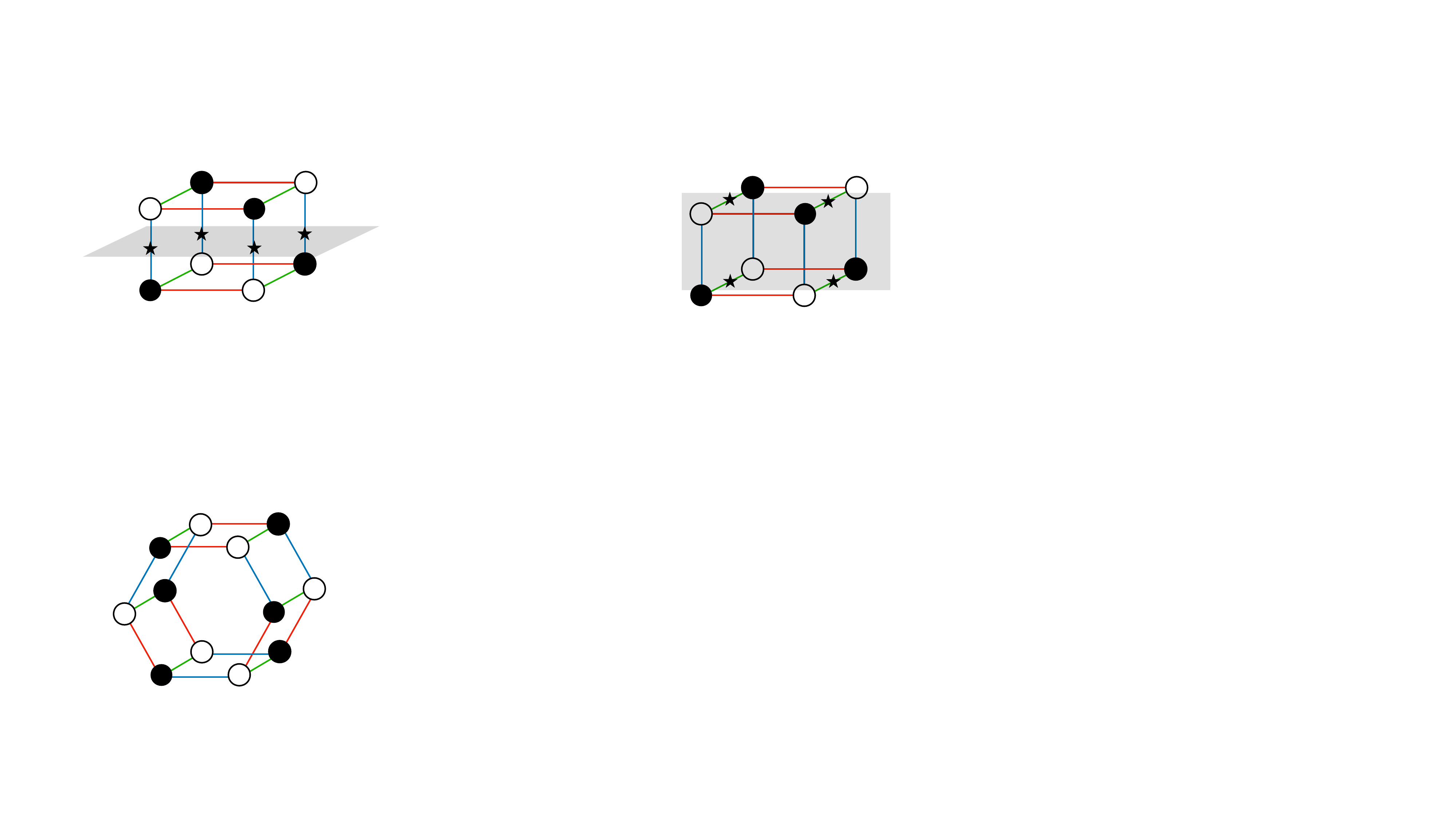}
    \label{fig:my_label}
    \caption{The $\psi$-graph of multi-invariant ${\tilde C}_3:= {\CZ}_{I_3\sqcup A_1}$.}\label{sol-example}
\end{figure}
In particular, we will take $|\psi\rangle$ to be a GHZ-type state
\begin{equation}
    |\psi\rangle = \frac{1}{\sqrt {5}}(2 |000\rangle + |111\rangle)
\end{equation}
Let us denote the three qubits as $A,B,C$. As discussed near equation \eqref{slocc}, in order to have a non-zero transition probability to $|\phi\rangle$, $|\phi\rangle$ needs to be related to $|\psi\rangle$ by an \slocc\ transformation. We take $|\phi\rangle$ to be,
\begin{align}
    |\phi\rangle = M_1\otimes M_2\otimes M_3 |\psi\rangle|_{\rm normalized}.
\end{align}
For convenience, we will take $M_1, M_2$ and $M_3$ to be identical and equal to $\exp(\alpha K)$ for some real $\alpha$. We choose $K$ arbitrarily to be,
\begin{equation}
    K = \begin{pmatrix}
        1 & 1 \\ -2 & -1
    \end{pmatrix}.
\end{equation}
This gives us a family of states $|\phi_\alpha\rangle$, all of which are related to $|\psi\rangle$ by \locc. 
Because all $M_i$ are chosen to be identical, just like $|\psi\rangle$, $|\phi\rangle$ is also symmetric under permutation of parties.
We first compute the optimal transition probability $p_V$ for $|\psi\rangle \to |\phi_\alpha\rangle$, by thinking of them as bi-partite states under the bi-partition, say $A$ and $BC$. Due to the symmetry of the state,  all the bi-partitions are equivalent.
\begin{figure}[t]\label{ghz-bound}
    \centering
        \includegraphics[width=7cm]{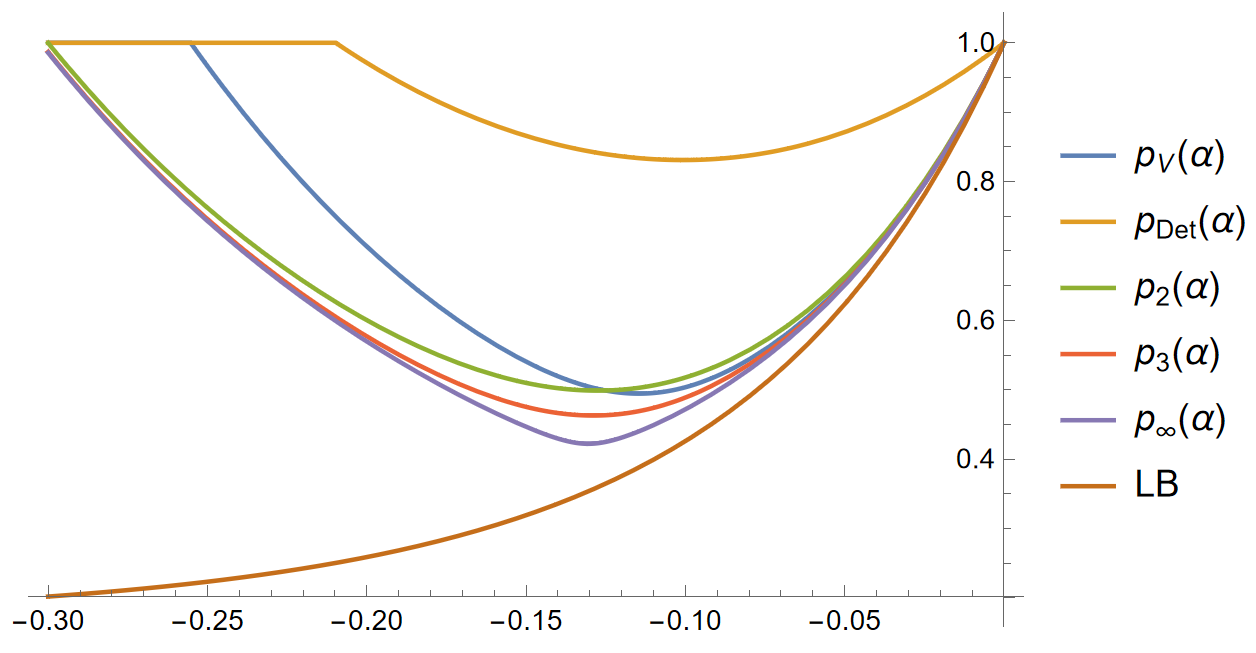}
    \label{fig:my_label}
    \caption{Plot of $p_V(\alpha)$ and $p_n(\alpha)$ for $n = 2,3,4$. We have also plotted the lower bound coming from equation \eqref{lowerb}.}
\end{figure}
\begin{equation}
    p_V(\alpha) = \text{min}_{k} \frac{\Tilde{\nu}_k^V(|\psi\rangle)}{\Tilde{\nu}_k^V(|\phi_\alpha\rangle)} .
\end{equation}
This is the optimal transition probability of $|\psi\rangle \to |\phi_\alpha\rangle$ if qubits $B$ and $C$ can be brought together and a joint operation can be performed on them. 
In the absence of such a joint operation, we expect that this transition probability to be smaller. We use the tri-partite monotones $\nu_n :=\hat \nu ({\tilde C}_n)$, to put upper bound $p_n$ on this ``tri-partite'' probability. 
\begin{align}
    p_n(\alpha) = \frac{\nu_n(|\psi\rangle)}{\nu_n(|\phi_\alpha\rangle)}.
\end{align}
We have plotted $p_V(\alpha)$ as well as $p_n(\alpha), n= 2, 3, \infty$ for a  range of $\alpha$ in figure \ref{ghz-bound}. 
This figure effectively illustrates the central point of the paper: we see that for a range of $\alpha$'s, the transition probability for $|\psi\rangle \to |\phi_\alpha\rangle$ that is achievable by truly local processes is smaller than the probability if the processes were non-local.

We have also plotted the bound on the transition probability obtained from the {PSEM} $\nu_{\rm Det}$ of equation \eqref{det-psem},
\begin{align}
    p_{\rm Det}(\alpha) = \frac{\nu_{\rm Det}(|\psi\rangle)}{\nu_{\rm Det}(|\phi_\alpha\rangle)}.
\end{align}
We see that the bounds obtained from the {PSEM}s constructed here are more stringent than $p_{\rm Det}(\alpha)$ bound for the given range of $\alpha$.
As a sanity check we have also plotted the lower bound on the transition probability $|\psi\rangle\to |\phi_\alpha\rangle$ computed in equation \eqref{lowerb}. As expected, it is below all the upper bounds discussed here. 

It would be interesting to investigate if the multi-partite entanglement monotones constructed here are primitive as defined in Definition \ref{primitive}. Another important question, along these lines, is to find the minimal complete set of  multi-partite entanglement monotones and show that it is minimal by explicitly constructing the optimal transformation protocol. 

\section{Conclusions and outlook}
\label{sec:conclusions}

We introduced a constructive route to multipartite pure-state entanglement monotones starting from local-unitary invariant polynomials namely multi-invariants and their bipartite, edge-labeled graph representation.
Our main technical result is a simple \emph{sufficient} condition - edge-convexity - on the graph that ensures that the appropriately normalized  multi-invariants are convex with respect to every single-party partial trace, and hence yield entanglement monotones via Vidal's concavity  characterization as stated  in theorem \ref{concave}.
We also established closure of the construction under a natural box-product operation, providing scalable families of explicit multipartite monotones.

Operationally, each PSEM immediately supplies an upper bound on the optimal success probability of stochastic LOCC conversion between pure states via the ratio of its values on the initial and the target states \cite{Vidal_1999, Vidal_2000}.
It is known that a convex monotonic function of PSEMs is again a PSEM \cite{Szalay_2015}. In theorem \ref{composite}, we showed that the bound on the success probability obtained from such composite monotones cannot improve upon the bounds obtained from the primitive monotones. 

From a practical standpoint, the multi-invariants underlying PSEMs admit multi-copy representations of the form
$Z_{\vec\sigma}=\Tr[\rho^{\otimes k}U_{\vec\sigma}]$ with $U_{\vec\sigma}$ a tensor product of party-wise permutations across $k$ copies; consequently they are, in principle, accessible using ancilla-assisted interferometric  estimation of $\Tr(\rho^{\otimes k}U_{\vec\sigma})$~\cite{Ekert:2002qtj, LeiferLindenWinter2004MeasuringPolynomialInvariants}
or randomized-measurement primitives based on unitary designs~\cite{Elben2018RenyiRandomQuenches,Brydges2019ProbingRenyiRandomized,Huang2020ClassicalShadows}. Below we list some future research directions.

\begin{itemize}
    \item Although our focus is on pure states, the PSEMs are constructed from simple polynomial functions of the state. We expect that this simplicity would make them amenable to extension to mixed states via the convex roof construction. 
    Thinking of the convex roof as a minimization problem over unitaries acting on the purifier, the convex roof can be upper-bounded  by the Haar average over the unitary group. For polynomial functions, such averaging can be performed using Weingarten calculus and the result is expressed as a finite sum over permutations. We leave a systematic development of such convex-roof bounds for future work.
    \item We proved that the $\psi$-graphs ${\cal E}_1$ and ${\cal C}_n$ are edge-convex, and that edge-convexity is preserved under the box product. This yields infinite families of edge-convex multi-invariants and hence PSEMs. A natural next step is to classify edge-convex $\psi$-graphs, thereby enlarging and organizing the space of monotones produced by our construction.
    \item Our multi-invariants admit a natural definition in quantum field theory via replica constructions. It would be interesting to clarify their physical interpretation in that setting. In particular, since renormalization-group (RG) flow provides a notion of coarse-graining reminiscent of locality-restricted processing, one may ask whether QFT analogues of our PSEMs can be combined into quantities that are monotonic along RG trajectories. We view this as an intriguing direction for future investigation.
\end{itemize}

\section*{Acknowledgements}
We would like to thank Jonathan Harper, Vineeth Krishna, Gautam Mandal, Shiraz Minwalla, Onkar Parrikar, Trakshu Sharma, Piyush Shrivastava, Sandip Trivedi for interesting discussions. 
This work is supported by the Infosys Endowment for the study of the Quantum Structure of Spacetime and by the SERB Ramanujan fellowship.  We acknowledge the support of the Department of Atomic Energy, Government of India, under Project Identification No. RTI 4002. HK would like to thank KVPY DST fellowship for partially supporting his work.
Finally, we acknowledge our debt to the people of India for their steady support to the study of the basic sciences.

\appendix

\section{Pure states to mixed states}\label{convex-roof}
In this appendix we discuss the extension of pure state entanglement monotones to mixed states. 
A pure state entanglement monotone is extended to mixed states using the so called \emph{convex roof extension} \cite{Uhlmann1998}. We first write the given mixed state as an ensemble of pure states $\rho=\sum_i p_i |\psi_i\rangle \langle \psi_i|$. 
There are multiple pure-state ensembles that realize $\rho$. Let us denote their set as ${\cal E}_\rho$. The convex roof function ${\tt CR}_\nu(\rho)$ on mixed states is defined as,
\begin{definition}[Convex roof]
    \begin{align}\label{roof}
        {\tt CR}_\nu(\rho):= \min_{{\cal E}_\rho} \sum_i p_i\, \nu(|\psi_i\rangle), \qquad \rho=\sum_i p_i |\psi_i\rangle \langle \psi_i|.
    \end{align}
\end{definition}
\noindent 
For future use, let us denote the pure state ensemble realizing the minimum as ${\tt e}_\nu(\rho)$. This extension was first considered by Uhlmann in  \cite{Uhlmann1998}.
The convex roof extension of the entanglement entropy was one of the first mixed state entanglement measures that was constructed and studied \cite{Bennett_1996}. This measure was termed the entanglement of formation. 
\begin{lemma}[Horodecki \cite{10.5555/2011326.2011328}]\label{horodeckia}
    ${\tt CR}_\nu(\rho)$ is a mixed state entanglement monotone.
\end{lemma}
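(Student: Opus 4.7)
The plan is to verify the three defining conditions of Definition \ref{em} (plus the implicit local unitary invariance) for ${\tt CR}_\nu$, with the monotonicity under unilocal operations being the main content of the argument.

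First I would handle the easy pieces. Local unitary invariance follows because any local unitary $U$ sets up a bijection between pure-state ensembles of $\rho$ and of $U\rho U^\dagger$, and the summands $\nu(|\psi_i\rangle)$ are preserved by assumption. For the normalization condition, if $\rho$ is separable in the sense of \eqref{sep}, I would spectrally decompose each local factor $\rho_k^{(i)}$ into pure states, giving an ensemble realizing $\rho$ whose pure states are all fully factorized; a fully factorized pure state has $\nu = 0$ (since {\tt PSEM}s vanish on product states as a consequence of \eqref{locc-mono-pure} applied with local projectors), so the minimum in \eqref{roof} is $0$.

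Convexity is similarly short. Given $\rho = \sum_a q_a \rho_a$, let ${\tt e}_\nu(\rho_a) = \{p_{a,i}, |\psi_{a,i}\rangle\}$ be an optimizing ensemble for each $\rho_a$. Then $\{q_a p_{a,i}, |\psi_{a,i}\rangle\}_{a,i}$ is a pure-state ensemble for $\rho$, so
\begin{align}
    {\tt CR}_\nu(\rho) \leq \sum_{a,i} q_a p_{a,i}\, \nu(|\psi_{a,i}\rangle) = \sum_a q_a\, {\tt CR}_\nu(\rho_a).
\end{align}

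The main step is monotonicity under a unilocal Kraus channel $\{E^{(A)}_j\}$. Let $\{p_i, |\psi_i\rangle\} = {\tt e}_\nu(\rho)$ be an optimal pure-state ensemble for $\rho$. Applying the channel to each term gives conditional weights $p_{j|i} = \|E^{(A)}_j|\psi_i\rangle\|^2$ and post-measurement pure states $|\psi_{i,j}\rangle = E^{(A)}_j|\psi_i\rangle/\sqrt{p_{j|i}}$. The outcome of the channel on $\rho$ is the ensemble of mixed states $\{p_j, \rho_j\}$ with $p_j = \sum_i p_i p_{j|i}$ and
\begin{align}
    \rho_j = \frac{1}{p_j} \sum_i p_i p_{j|i} \,|\psi_{i,j}\rangle\langle\psi_{i,j}|,
\end{align}
which is itself a pure-state ensemble for $\rho_j$. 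Hence by the definition of ${\tt CR}_\nu$,
\begin{align}
    \sum_j p_j\, {\tt CR}_\nu(\rho_j) \leq \sum_j p_j \sum_i \frac{p_i p_{j|i}}{p_j} \nu(|\psi_{i,j}\rangle) = \sum_i p_i \sum_j p_{j|i}\, \nu(|\psi_{i,j}\rangle) \leq \sum_i p_i\, \nu(|\psi_i\rangle) = {\tt CR}_\nu(\rho),
\end{align}
where the last inequality is the pure-state monotonicity \eqref{locc-mono-pure} of $\nu$ applied to each $|\psi_i\rangle$ under the channel $\{E^{(A)}_j\}$.

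The subtle point, and the only place where care is needed, is to notice that the right-hand side of monotonicity involves ${\tt CR}_\nu(\rho_j)$ for \emph{mixed} outcome states $\rho_j$, not for pure states; the trick is to use the \emph{inherited} pure-state ensemble $\{p_i p_{j|i}/p_j, |\psi_{i,j}\rangle\}$ as a (not necessarily optimal) witness ensemble for $\rho_j$, which gives the first inequality above. After this the double sum factorizes and pure-state monotonicity finishes the job. With the four conditions established, ${\tt CR}_\nu$ meets Definition \ref{em}.
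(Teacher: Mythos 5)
Your proof is correct and follows essentially the same route as the paper: both arguments use the inherited ensemble $\{p_i p_{j|i}/p_j, |\psi_{i,j}\rangle\}$ as a witness for ${\tt CR}_\nu(\rho_j)$, then exchange the sums and invoke the pure-state monotonicity \eqref{locc-mono-pure}. The only difference is cosmetic — you start from the optimal ensemble while the paper works with an arbitrary one and passes to the optimum at the end — plus you explicitly verify local unitary invariance and the normalization condition, which the paper leaves implicit.
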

\begin{proof}
    First note that by virtue of being a convex roof ${\tt CR}_\nu(\rho)$ is convex. This is because the union of sets ${\cal E}_{\rho_1}$ and ${\cal E}_{\rho_2}$ over which we do minimization to compute $p {\tt CR}_\nu(\rho_1)+ (1-p) {\tt CR}_\nu(\rho_2) $ is a subset of ${\cal E}_{p\rho_1+(1-p)\rho_2}$. 
    
    Now we will show that ${\tt CR}_\nu(\rho)$ obeys the equation \eqref{locc-mono}.
    Let $\rho$ be an ensemble of pure states as $\rho=\sum_i p_i |\psi_i\rangle\langle \psi_i|$. Its transformation with respect to the set $E_k$ produces
    \begin{align}
        \Lambda(\rho) &= \sum_k q_k \sigma_k,\\
         q_k &:={\rm Tr}(E_k \rho E_k^\dagger),\qquad \sigma_k := E_k \rho E_k^\dagger/q_k. \notag
    \end{align}
    Using the expression for $\rho$ in terms of pure states,
    \begin{align}\label{sigmaconvex}
        \sigma_k &=\frac{1}{q_k} \sum_i p_i q_k^i|\psi_i^k\rangle \langle \psi_i^k|, \\
        q_k^i &:= {\rm Tr}(E_k |\psi_i\rangle \langle\psi_i| E_k^\dagger), \qquad |\psi_i^k\rangle := E_k |\psi_i\rangle/\sqrt{q_i^k}.\notag 
    \end{align}
    We want to show that ${\tt CR}_\nu(\rho) \geq \sum_k q_k {\tt CR}_\nu(\sigma_k)$. As $\sigma_k$ is given as the convex combination as in equation \eqref{sigmaconvex}, using convexity of ${\tt CR}_\nu$,
    \begin{align}\label{temp1}
        \sum_k q_k {\tt CR}_\nu(\sigma_k)\leq \sum_k \sum_i p_i q_k^i \nu(|\psi_i^k\rangle)\leq \sum_i p_i \nu(|\psi_i\rangle).
    \end{align} 
    In the last inequality, we have used the pure state monotonicity \eqref{locc-mono-pure}. As this equation is true for any pure state ensemble realizing $\rho$, it is also true for the optimal pure state ensemble  ${\tt e}_\nu (\rho)$. Taking the optimal ensemble, the right-hand side of the inequality \eqref{temp1} is precisely ${\tt CR}_\nu (\rho)$.
\end{proof}
Let us also comment that there are mixed state entanglement monotones that are not defined using convex roofs. 
They include an appropriately defined distance of $\rho$ from the set of separable states e.g. \cite{Vedral_1997, Vedral_1998}. Just like convex roofs, these measures are also difficult to compute.  In the bi-partite case, there is a monotone that is efficiently computable - even for mixed states - known as logarithmic negativity \cite{Plenio_2005}. For two qubit mixed states, the convex roof of the entanglement entropy, called entropy of formation, was computed explicitly in \cite{Wootters:1997id}.

\section{Review of Proofs}\label{review-proofs}

\szalay*
\begin{proof}
    Let the pure state $|\psi\rangle$ transform into an ensemble of pure states $ \Lambda(|\psi\rangle)=\{p_i, |\psi_i\rangle\}$ under the \locc\ transformation $\Lambda$. 
    \begin{align}
        &\sum_i p_i G(\nu_1(|\psi_i\rangle),\ldots, \nu_k(|\psi_i\rangle))\notag\\ 
        &\leq G(\sum_i p_i \nu_1(|\psi_i\rangle), \ldots, \sum_i p_i \nu_k(|\psi_i\rangle))\notag\\
        &\leq G(\nu_1(|\psi\rangle), \ldots, \nu_k(|\psi\rangle)).
    \end{align}
    In the first line we have used the concavity of $G$ in all of its arguments. In the second line, we have used the monotonicity of $G$, along with the defining {PSEM} property \eqref{locc-mono-pure} of $\nu_i$s.
\end{proof}

\vidala*
\begin{proof}
    Let $\rho_*$ be one of the possible outcomes of the positive operator valued measurement on $\rho$. In other words, $\rho_*$ appears as proportional to one of the $E_i\rho E_i^\dagger$, say the one corresponding to $E_*$ and the probability of this outcome is $p_*={\rm Tr} (E_*\rho E_*^\dagger)$. Then, using the properties of entanglement monotones,
    \begin{align}
        \mu(\rho) \geq  p_*\mu(\rho_*)+ \sum_{i\in{\rm rest}} p_i  \mu(\rho_i) \geq p_*\mu(\rho_*).
    \end{align}
    This shows that $p_{\rho\to \rho_*}\leq \frac{\mu(\rho)}{\mu(\rho_*)}$ for every $\mu$. As for the equality in equation \eqref{opt}, note that $p_{\rho\to \rho_*}$ itself is an entanglement monotone of $\rho$, for a fixed $\rho_*$, and that $p_{\rho_*\to \rho_*}=1$.
\end{proof}

\vidalb*
\begin{proof}
    Let the state $\rho=|\psi\rangle\langle \psi|$ be converted to 
        \begin{align}\label{partyA}
            &\Lambda(\rho)=\sum_i p_i |\psi_i\rangle \langle \psi_i|,\notag\\
            &  {p_i}:=|E_i |\psi\rangle|^2, \,\psi_i := E_i |\psi\rangle /\sqrt{p_i}
        \end{align}
    after a local operation by party $A$.  We need to show that 
    \begin{align}
        f(|\psi\rangle) \geq \sum_i p_i f(|\psi_i\rangle).
    \end{align}
    The operation \eqref{partyA} keeps $\rho_{\bar{A}} $ invariant i.e. $\Lambda(\rho)_A=\rho_{\bar{A}}$ because of the trace preserving property $\sum_i E^{(A)\dagger}_i E^{(A)}_i ={\mathbb I}$. So we have
    \begin{align}
        \rho_{\bar{A}} = \sum_i p_i \rho_{i,{\bar{A}}},\qquad \rho_{i,{\bar{A}}}:= {\rm Tr}_A |\psi_i\rangle\langle \psi_i|.
    \end{align}
    The concavity of $f$ in $\rho_{\bar{A}}$ implies, $f(\rho_{\bar{A}}) \geq \sum_i p_i f(\rho_{i,{\bar{A}}})$. When expressed in terms of purifications $|\psi\rangle$ and $|\psi_i\rangle$'s, this equation is exactly the one that we want to prove. Note that, for this argument to go through, it is necessary that the density matrices $\rho_{\bar{A}}$ and $\rho_{i,{\bar{A}}}$ are obtained by tracing out precisely one party and not more.
    As the concavity holds for partial trace $\rho_{\bar{A}}$ for any party $A$, the monotonicity also holds for local operations by any party. 
\end{proof}

\vidalc*

\begin{proof}
    We will show that $\tilde \nu_k^V$ are concave function of the density matrix $\rho_{\bar{A}}$. This, in conjunction with theorem \ref{concave} proves that $\tilde \nu_k^V$ are {PSEM}s.
It is convenient to use the following characterization of $\sum_{i=1}^k \lambda_i$. 
\begin{align}
    \sum_{i=1}^k \lambda_i = \max_{P_k} \,{\rm Tr} (P_k \, \rho_{\bar{A}}),
\end{align}
where $P_k$ are rank $k$ projectors and maximization is performed over all such projectors. This fact is known as Ky Fan's maximum principle. Clearly,
\begin{align}
    &\max_{P_k} \,{\rm Tr} P_k \,(  p\rho_{1,{\bar{A}}}+(1-p)\rho_{2,{\bar{A}}}) \notag\\
    &\leq  p\max_{P_k} \,{\rm Tr} (P_k \, \rho_{1,{\bar{A}}})+ (1-p)\max_{P_k} \,{\rm Tr} (P_k \,\rho_{2,{\bar{A}}}).
\end{align}
This is because, the right-hand side allows for two different choices of projectors to maximize the two terms but on the left-hand side, the same projector has to be used for both terms. This shows that $\sum_{i=1}^k \lambda_i$ is a convex function of $\rho$ and hence $\tilde \nu_k^V$ is a concave function of $\rho$. The constant $1$ is chosen so that it vanishes on the factorized state i.e. for rank $1$ density matrix state with $\lambda_1=1$ and the rest of the $\lambda$'s zero.
\end{proof}

\section{Proofs of new results}\label{proof-theorem1}

\composite*
\begin{proof}
    We will prove this inequality using induction in $k$. Let first us assume that we have proved it for $k=1$. Consider the case $x_k<x'_k$, then
    \begin{align}
        &\frac{G(x_1,\ldots, x_k)}{G(x'_1,\ldots, x'_k)}=\frac{G(\alpha_1\, x_k,\ldots, \alpha_{k-1}\, x_k, x_k)}{G(\alpha_1 x'_k,\ldots, \alpha_{k-1} \, x'_k, x'_k)}\notag\\
        &\times \frac{G(\alpha_1 x'_k,\ldots, \alpha_{k-1} \, x'_k, x'_k)}{G(\beta_1 \, x'_k,\ldots, \beta_{k-1}\, x'_k, x'_k)}
    \end{align} 
    Here we have defined $\alpha_i= x_i/x_k$ and $\beta_i=x'_i/x'_k$ for $i=1,\ldots, k-1$.
    Let the first ratio on the right hand side be $A$ and the second be $B$. Define $f(x_k)=G(\alpha_1\, x_k,\ldots, \alpha_{k-1}\, x_k, x_k)$. This is a monotonic and concave function of $x_k$. This shows 
    \begin{align}\label{f1}
        A \geq \min\Big( \frac{x_k}{x'_k},1\Big)=\frac{x_k}{x'_k}.
    \end{align}
    Now define $\tilde f(\alpha_1,\ldots, \alpha_{k-1})=G(\alpha_1 x'_k,\ldots, \alpha_{k-1} \, x'_k, x'_k)$. This is a monotonic and concave function of its $k-1$ arguments. Using induction, 
    \begin{align}\label{f2}
        B \geq \min\Big(
            \frac{\alpha_1}{\beta_1},\ldots, \frac{\alpha_{k-1}}{\beta_{k-1}},1
            \Big).
    \end{align}
    The inequalities \eqref{f1} and \eqref{f2} imply
    \begin{align}
        A\cdot B \geq \min \Big(
            \frac{x_1}{x'_1},\ldots, \frac{x_k}{x'_k},1
            \Big).
    \end{align}
    In the other case $x_k \geq x'_k$,
    \begin{align}
        &\frac{G(x_1,\ldots, x_k)}{G(x'_1,\ldots, x'_k)} \geq \frac{G(x_1,\ldots,x_{k-1}, x'_k)}{G(x'_1,\ldots,x'_{k-1}, x'_k)} \notag\\
        &\geq \min\Big(
            \frac{x_1}{x'_1},\ldots, \frac{x_{k-1}}{x'_{k-1}},1
            \Big)\notag\\
            &\geq \min\Big(
                \frac{x_1}{x'_1},\ldots, \frac{x_{k}}{x'_{k}},1
                \Big).
    \end{align}\
    Here, in the first inequality, we have used the monotonicity property of $G$ i.e. $G(x_1,\ldots,x_{k-1}, x'_k)\geq G(x_1,\ldots,x_{k-1}, x_k)$. In the second inequality, we have used the inductive assumption for the function $f'(x_1,\ldots, x_{k-1})=G(x_1,\ldots,x_{k-1}, x'_k)$ of 
    $k-1$ arguments.

    Now, let us prove the inequality \eqref{comp} for $k=1$. For $x_1\geq x'_1$,
    \begin{align}
        \frac{G(x_1)}{G(x'_1)}\geq 1,
    \end{align}
    due to monotonicity of $G$. We only need to consider the case $x_1<x'_1$. 
    \begin{align}
        \frac{G(x_1)}{G(x'_1)}-\frac{x_1}{x'_1}=\frac{x'_1 G(x_1)-x_1 G(x'_1)}{x'_1 G(x'_1)}.
    \end{align}
    Now we will show that the numerator $n(x,y)=y G(x)-x G(y)$ of the right hand side is non-negative for $x<y$. 
    \begin{align}
        \partial_{y} n(x,y)&=G(x)-xG'(y)\notag\\
        \partial_x\partial_y n(x,y)&=G'(x)-G'(y)=-\int_x^y d\alpha G''(\alpha)\geq 0.
    \end{align}
    This shows that the function $\partial_y n(x,y)$ is a monotonic function of $x$. So
    \begin{align}
        G(x)-xG'(y)\geq G(0)-0G'(y)\geq 0.
    \end{align}
    This shows that the function $n(x,y)$ is a monotonic function of $y$. Hence $n(x,y)\geq n(x,x)=0$. This completes the proof.
\end{proof}

\generalproj*

\begin{proof}
    Recall that $f(\rho_{\bar{A}})$ can also be written as a function on the pure states $f(|\psi\rangle)$ where $|\psi\rangle$ is a purification of $\rho_{\bar{A}}$. Convexity of $f(\rho_{\bar{A}})$ means that it obeys 
    \begin{align}\label{assume1}
        &f(p \rho^{(1)}_{\bar{A}}+(1-p)\rho^{(2)}_{\bar{A}}) \notag\\
        &\leq p f(\rho^{(1)}_{\bar{A}})+(1-p)f(\rho^{(2)}_{\bar{A}}),
    \end{align}
    where $\rho_{\bar{A}}^{(i)}$ are normalized density matrix. Thanks to linear homogeneity of $f(\rho_{\bar{A}})$, the inequality continues to hold even if we take $\rho^{(i)}_{\bar{A}}$'s to be un-normalized. Lets prove it. 
    Taking $\rho^{(i)}_{\bar{A}}$ to be un-normalized, define its normalized version as $\rho^{(i),N}_{\bar{A}} :=\rho^{(i)}_{\bar{A}}/ n_i  $ where $n_i = {\rm Tr} \rho^{(i)}_{\bar{A}}$. 
    \begin{align}
        &f(p \rho^{(1)}_{\bar{A}}+(1-p)\rho^{(2)}_{\bar{A}})\notag\\
        &=f(n(p \frac{n_1}{n}\rho^{(1),N}_{\bar{A}}+(1-p) \frac{n_2}{n}\rho^{(2),N}_{\bar{A}}))\notag\\
        &= n f(p \frac{n_1}{n}\rho^{(1),N}_{\bar{A}}+(1-p) \frac{n_2}{n}\rho^{(2),N}_{\bar{A}})
    \end{align}
    Here $n$ is ${\rm Tr} (p \rho^{(1)}_{\bar{A}}+(1-p)\rho^{(2)}_{\bar{A}})$. In the second line we have used linearity of $f$. Now we have $f$ evaluated on the normalized density matrix that is expressed as convex linear combination of two normalized density matrices. 
    \begin{align}
        &f(p \rho^{(1)}_{\bar{A}}+(1-p)\rho^{(2)}_{\bar{A}}) \notag\\
        &\leq n \Big(p \frac{n_1}{n} f(\rho^{(1),N}_{\bar{A}})+(1-p) \frac{n_2}{n} f(\rho^{(2),N}_{\bar{A}}) \Big)\notag\\
        & = p f(\rho^{(1)}_{\bar{A}})+(1-p)f(\rho^{(2)}_{\bar{A}}).
    \end{align}
    In the first line we used the convexity of $f$ for normalized density matrices and in the second line we again used its linearity.

    Now we move on to show 
    \begin{align}\label{assume}
        &\tilde f_{k_1,\ldots, k_\tq}(p \rho^{(1)}_{\bar{A}}+(1-p)\rho^{(2)}_{\bar{A}}) \notag\\
        &\leq p \tilde f_{k_1,\ldots, k_\tq}(\rho^{(1)}_{\bar{A}})+(1-p)\tilde f_{k_1,\ldots, k_\tq}(\rho^{(2)}_{\bar{A}}),
    \end{align}
    where $\rho^{(i)}_{\bar{A}}$ are normalized density matrices. Let $|\psi^{(i)}_A\rangle$ be their purifications. Recall,
    \begin{align}
        &\tilde f_{k_1,\ldots, k_\tq}(p \rho^{(1)}_{\bar{A}}+(1-p)\rho^{(2)}_{\bar{A}}) \\
        &:=\max_{P_{k_1,\ldots , P_{k_\tq}}} f(P_{k_1,\ldots , P_{k_\tq}} (\sqrt{p} |\psi^{(1)}\rangle +\sqrt{1-p}|\psi^{(2)}\rangle)).\notag
    \end{align}
    Let  $P^*_{k_1,\ldots, k_\tq}$ be the projector that maximizes the right hand side. Let $P^*_{k_1,\ldots, k_\tq}|\psi^{(i)}\rangle = |\tilde {\psi}^{(i)}\rangle$ be two un-normalized states. Then
    \begin{align}
        &\tilde f_{k_1,\ldots, k_\tq}(p \rho^{(1)}_{\bar{A}}+(1-p)\rho^{(2)}_{\bar{A}}) \notag\\
        &= f(\sqrt{p} |\tilde {\psi}^{(1)}\rangle+ \sqrt{1-p} |\tilde {\psi}^{(2)}\rangle).
    \end{align}
    Using the convexity of $f$ for un-normalized states and linearity,
    \begin{align}
        &\tilde f_{k_1,\ldots, k_\tq}(p \rho^{(1)}_{\bar{A}}+(1-p)\rho^{(2)}_{\bar{A}})\notag\\
        &\leq p f( |\tilde {\psi}^{(1)}\rangle)+ (1-p)  f(|\tilde {\psi}^{(2)}\rangle).
    \end{align}
    Clearly, $f( |\tilde {\psi}^{(i)}\rangle) \leq \max_{P_{k_1,\ldots ,k_\tq}} f(P_{k_1,\ldots ,k_\tq} |\psi^{(i)}\rangle)$. This shows that the right hand side of the above inequality is $\leq p\tilde f(|\psi^{(1)}\rangle)+ (1-p) \tilde f(|\psi^{(2)}\rangle)$.
\end{proof}

\vertexconvex*
\begin{proof}
    We will only make use of $A$-edge-convexity for two edge-labeles $A$ to prove this result. 
    If the graph is $\CE^{(1)}$, it is vertex-convex. Let us assume that it is not $\CE^{(1)}$.
    Let us consider all the reflecting cuts $k$ separating $A$-edges. As proved above, those are also the reflecting cuts separating any pair of vertices as long as the pair is not connected by an $A$-edge. For such pairs we define
    \begin{align}
        \CM^{(k)}_{u,v} = \CM^{(k)}_{e_u,e_v} 
    \end{align}
    where $e_u$ and $e_v$ are the $A$-edges incident on $u$ and $v$ respectively. This is a positive definite matrix. 
    When $u$ and $v$ are connected by $A$-edge, consider a different type of edge, say $B$. The $B$-edge incident on $u$ and $v$ are then distinct. Let $\tilde k$ be the reflecting cut separating these $B$-edges. It must cut the $A$-edge joining $u$ and $v$. Because of the reflecting cut property, $u$ and $v$ are images of each other under the reflecting cut $k^*$. This can be done for every pair of $u$ and $v$ that is connected by $A$-edge. Let the associated reflecting cut be $\tilde k_{V(u,v)}$. Then for such pairs we take
    \begin{align}
        \CM^{(\tilde k_{V(u,v)})}_{u',v'}  = \delta_{u,u'}\delta_{v,v'}.
    \end{align}
    This is also a positive definite matrix (a single entry of $1$ on the diagonal). This shows that the graph is also vertex-convex. 
\end{proof}

\bibliography{LargeDCFT}

\end{document}